\newtheorem{theorem}{Theorem}
\newtheorem{corollary}{Corollary}
\newcolumntype{P}[1]{>{\centering\arraybackslash}p{#1}}
\DeclareMathOperator*{\argmax}{arg\,max}
\definecolor{Gr1}{HTML}{1B9E77}
\definecolor{Gr2}{HTML}{D95F02}
\definecolor{Gr3}{HTML}{7570B3}
\definecolor{Gr4}{HTML}{E7298A}
\definecolor{Gr5}{HTML}{66A61E}
\newcommand{\mR}{\mathcal R}
\title[Five Distinct Kinds of Gamma Ray Bursts]{Multivariate
  $t$-Mixtures-Model-based Cluster Analysis of BATSE Catalog
  Establishes Importance of All  Observed Parameters, Confirms Five
  Distinct Ellipsoidal Sub-populations of Gamma Ray Bursts}
\author[Chattopadhyay and Maitra]{
  Souradeep Chattopadhyay,$^{1}$
  and Ranjan Maitra,$^{1}$\thanks{E-mail: maitra@iastate.edu (RM)}\\
$^{1}$ Department of Statistics, Iowa State University, 2438, Osborn
  Drive, Ames, Iowa 50011-1090, USA
}
\date{Accepted XXX. Received YYY; in original form ZZZ}
\begin{document}
\label{firstpage}
\pagerange{\pageref{firstpage}--\pageref{lastpage}}
\maketitle

\begin{abstract}
Determining the kinds of gamma-ray bursts (GRBs) has been of  interest to
astronomers for many years. We  analyzed 1599 GRBs from the Burst and Transient Source Experiment (BATSE) 4Br catalogue using  $t$-mixtures-model-based clustering on
all  nine observed parameters ($T_{50}$, $T_{90}$, $F_1$, $F_2$,
$F_3$, $F_4$, $P_{64}$, $P_{256}$, $P_{1024}$) and found evidence of five
types of GRBs. Our results further refine the
findings of  \citet{chattopadhyayandmaitra17} by providing groups that
are more distinct. Using the \citet{mukherjeeetal98}  classification scheme (also used by \citet{chattopadhyayandmaitra17}) of duration, total fluence ($F_t = F_1 + F_2 + F_3 + F_4$)) and spectrum (using Hardness Ratio $H_{321} = F_3/(F_1 + F_2)$)  our five groups are classified as long-intermediate-intermediate, short-faint-intermediate, short-faint-soft, long-bright-hard, and long-intermediate-hard. We also classify 374 GRBs in the
BATSE catalogue that have incomplete information in some of the observed
variables (mainly the four time integrated fluences $F_1$, $F_2$,
$F_3$ and $F_4$) to the five groups obtained, using the 1599 GRBs
having complete information in all the observed variables. Our classification scheme puts 138 GRBs in the first group, 52 GRBs in the second group, 33 GRBs in the third group, 127 GRBs in the fourth group and 24 GRBs in the fifth group.
\end{abstract}

\begin{keywords}
methods: data analysis - methods: statistical - gamma-rays: general.
\end{keywords}



\section{Introduction}
Gamma Ray Bursts (GRBs) are among the brightest electromagnetic events
known in space and have been actively researched ever since their
discovery in the late 1960s, mainly because researchers hypothesize
that these celestial events hold the clue to understanding of
numerous mysteries of the outer cosmos. The source and nature of these
highly explosive events remain unresolved
~\citep{chattopadhyayetal07} with researchers hypothesizing that GRBs
are a heterogeneous group of several subpopulations \citep[{\em 
  e.g.,}][]{mazetsetal81,norrisetal84,dezalayetal92} but there are 
questions on the number of these groups and their underlying
properties. Most analyses pertaining to GRBs have been carried out
using duration variable $T_{90}$ (or the time by which 90 per cent of the
flux arrive) while a few analysts have used fluence and spectral properties along with duration. 
\cite{kouveliotouetal93} found that the $\log_{10}T_{90}$ variable
from the Burst and Transient Source Experiment (BATSE) Catalogue
follows a bimodal distribution and established the two well-known
classes of GRBs, namely the short duration  ($T_{90} < 2s $) and the
long duration bursts ($T_{90} > 2s$). The progenitors
of short duration bursts are thought to be merger of two neutron stars
(NS-NS) or that of a neutron star with a black hole (NS-BH) \citep{nakar07},
while that of long duration bursts are largely believed to be associated with
the collapse of massive
stars~\citep{paczynski98,woosleyandbloom06}. Many other authors 
subsequently carried out several experimental studies using BATSE and
other catalogues and reported a variety of
findings. \citet{pendletonetal97} used 882 GRBs from the BATSE catalog
to perform spectral analysis and found two classes of bursts -- the
high Energy  (HE) and the non high Energy  (NHE) Bursts. \citet{horvathetal98} proposed the presence of a third class
of GRBs by making two and three Gaussian components fits to the $\log_{10}T_{90}$
variable of 797 GRBs in the BATSE 3B catalog. Several
authors 
\citep{horvath02,horvathetal08,horvath09,tarnopolski15,horvathandtoth16,zitounietal15,hujaetal09,horvathandtoth16} 
have since then supported the presence of a third Gaussian
component but \citet{zhangetal16} and \citet{kulkarnianddesai17} have
concluded that the duration variables show  a three-Gaussian-components model only for the \textit{Swift}/ BAT 
dataset but a two-Gaussian-components model for the BATSE and
\textit{Fermi} data sets. Recently \citet{acunerandryde18} found 
five types of GRBs in the {\it Fermi} catalogue. 
\citet{ripaetal09} analyzed the duration and
hardness ratios of 427 GRBs from the RHESSI satellite and found that
a $\chi^2$- or $F$- test on $T_{90}$ does not indicate any
statistically significant intermediate group in the RHESSI data but a
maximum likelihood test using $T_{90}$ and hardness indicates a
statistically significant intermediate group in the same data
set. They concluded that like BATSE, RHESSI also shows evidence of the
presence of an intermediate group. However, use of a $\chi^2$-test on
twice the difference in log likelihoods between two models 
assumes that the larger model is nested within the null model, an
assumption that generally does not hold for non-hierarchical
clustering algorithms
\citep{chattopadhyayandmaitra17}. \citep[Refer to][for a review of testing
mechanisms in such situations.]{maitraetal12}
 
 \citet{mukherjeeetal98} first considered multivariate analysis by
 carrying out non-parametric hierarchical clustering, using six
 variables on 797 GRBs from the BATSE 3B catalog and found evidence of
 three groups. They also performed Model Based Clustering (MBC) by
 eliminating three of those six variables citing presence of
 redundancy through visual inspection. \citet{chattopadhyayetal07}
 carried out $k$-means using the same six variables used by
 \citet{mukherjeeetal98} for non-parametric hierarchical clustering
 and supported the presence of three groups in the BATSE 4B catalog
 \citep[but see][for caution on the use of $k$-means for BATSE
 data.]{chattopadhyayandmaitra17} \citet{chattopadhyayandmaitra17}
 carried out model-based variable selection on the six variables used
 by \citet{chattopadhyayetal07} and \citet{mukherjeeetal98} and were unable to find any evidence of redundancy among them. They carried out
 MBC using Gaussian mixtures, using the same six variables and obtained
 five elliptically-dispersed groups. The BATSE 4Br catalog has a number of zero
 entries in some of the observed variables, mostly in the time
 integrated fluences $F_1-F_4$. Citing personal communication from
 Charles Meegan, \citet{chattopadhyayandmaitra17} pointed out that
 these zero values are not numerical zeroes but missing parameter
 readings on 
 a GRB and hence including them as numerical values in the analysis is
 inappropriate because of the potential for bias in the results. Most
 authors performing multivariate analysis have removed 
 those GRBs that have incomplete information in them, as a result of
 which their properties have never been extensively studied. In this
 paper we have attempted to study the properties of these bursts
 having incomplete information by classifying them to groups obtained
 using GRBs having complete information on all observed variables. 
 
 Clustering is an unsupervised learning approach to group observations
 without any response variable. Clustering 
 algorithms are broadly of the  hierarchical and
 the non-hierarchical types. The former consists of 
 both agglomerative and divisive algorithms where groups are formed in
 a tree-like hierarchy with the property that observations that are
 together at one level are also together higher up  the
 tree. Non-hierarchical algorithms, such as MBC or $k$-means, typically 
 optimize an objective function using iterative greedy algorithms for
 a specified number of groups. The objective function is often
 multimodal and requires careful initialization~\citep{maitra09}. For a
 detailed review on clustering see \citet{chattopadhyayandmaitra17}.

The work of \citet{chattopadhyayandmaitra17} carefully analyzed
the BATSE 4Br data using the two duration variables, the peak flux in
time bins of 256 milliseconds (ms), the total fluence and two spectral hardness
measures $H_{32}$ and $H_{321}$. They established  five
Gaussian-dispersed groups but upset the 
existing view in the astrophysical community that there are between
two and three kinds of GRBs in the BATSE 4Br catalogue. This led us to
wonder if there really were fewer than five ellipsoidal groups that our
methods were not picking up because the assumption of Gaussian
components was not allowing for heavier-tailed, wider-dispersed
groups such as described by the multivariate $t$-distribution. We
also wondered if the other parameters in the BATSE 4Br catalog
summarily discarded or summarized by other authors contained important 
clustering information that would help in  arriving at better-defined
groups. In this article, we therefore analyze whether all nine observed
parameters are needed in clustering the GRBs in the catalogue. We also
examine MBC on the GRB dataset using
multivariate $t$-mixtures.

The remainder of the article is structured as follows. 
 Section  \ref{ov:background}  provides an overview of MBC and
 classification using mixtures of multivariate $t$-distributions that
 allow for more general model-based representations of elliptical
 subpopulations than do Gaussian mixtures. 
Section \ref{GRB:1599:full} establishes that all nine original parameters have
 clustering information, and analyses and discusses results on
$t$-mixtures-MBC ($t$MMBC) done on the 1599
 BATSE 
 4Br GRBs having observations on all nine parameters. Finally Section
 \ref{class-374} classifies the GRBs  having incomplete information to
 the groups obtained by MBC on the  1599 GRBs and examines their
 properties. We conclude with some discussion in Section \ref{conclusion}.

 \section{Overview of MBC and Classification}
 \label{ov:background}
 We briefly describe $t$MMBC and Classification, specifically including methods and techniques that are easily implemented  using the open-source statistical
software R~\citep{R} and its packages. 
 \subsection{Preliminaries}
 \label{ov:teigen}
 \subsubsection{The Multivariate $t$ distribution}
 Let $Y$ be a $p$-dimensional random vector having the multivariate
 Gaussian distribution $N_{p}(0, \Sigma)$ and $S$ be a random
 variable, independent of $Y$, that has a  $\chi^{2}$ distribution
 with ${\nu}$ degrees of freedom. Let $\Sigma$ be a
 positive-definite matrix. Then $X = \mu + Y{\sqrt{\nu/S}}$ follows a multivariate $t_p(\mu,\Sigma;\nu)$ distribution with
 mean vector $\mu$, scale matrix $\Sigma$ and degrees of freedom
 $\nu$, and multivariate probability density function (PDF)
 \begin{equation}
 \label{mult:t}
f_t(x; \mu, \Sigma, \nu) = \frac{\Gamma (\nu +
  p)/2}{\Gamma(\nu/2){\nu}^{\frac{p}{2}}{\pi}^{\frac{p}{2}}{|\Sigma|}^\frac{1}{2}}\times
\Big[1+ \frac{1}{\nu}(X-\mu)^T\Sigma^{-1}(X-\mu)\Big], 
 \end{equation}
for $x\in\mathbb{R}^p$.
The multivariate $t$-distribution is centered and ellipsoidally
symmetric around its mean vector $E(X)=\mu$. The variance-covariance
(or dispersion) matrix is given by $Var(X) = \nu\Sigma/(\nu - 2)$,
therefore having higher spread than the $N_p(\mu,\Sigma)$
distribution, with the exact amount of spread modulated by the degrees
of freedom $\nu$. It is easy to see 
that as $\nu\rightarrow\infty$, the dispersion converges to
$\Sigma$ and indeed $t_p(\mu,\Sigma;\nu)$ converges in law (distribution) to
$N_p(\mu,\Sigma)$. We illustrate  the influence of $\nu$ through a set
of two-dimensional examples in Fig. \ref{coutour.t} that displays
the contour density plot of 
multivariate $t_\nu$ distributions for $\nu=5, 15, 25,\infty$. The
contours are for the ellipses of concentration that 
contain the densest $100\alpha$ per cent of the distribution for $\alpha =
0.1,0.2,\ldots, 0.9, 0.99$. 
The figure corresponding to $\nu=5$ has
the highest spread, with more observations in the tails, but this
spread and tail-preference of the observations decreases with
increasing $\nu$. With infinite degrees of freedom, the multivariate
$t$-density has a similar spread as the  multivariate normal
density. These example illustrate how the multivariate
$t_\nu$ density is concentrated or dispersed around the mean vector $\mu$
accordingly as $\nu$ increases or decreases. 
 \begin{figure}
\mbox{ 
\subfloat[$\nu=5$]{\label{fig:cont:5}\includegraphics[width=0.25\textwidth]{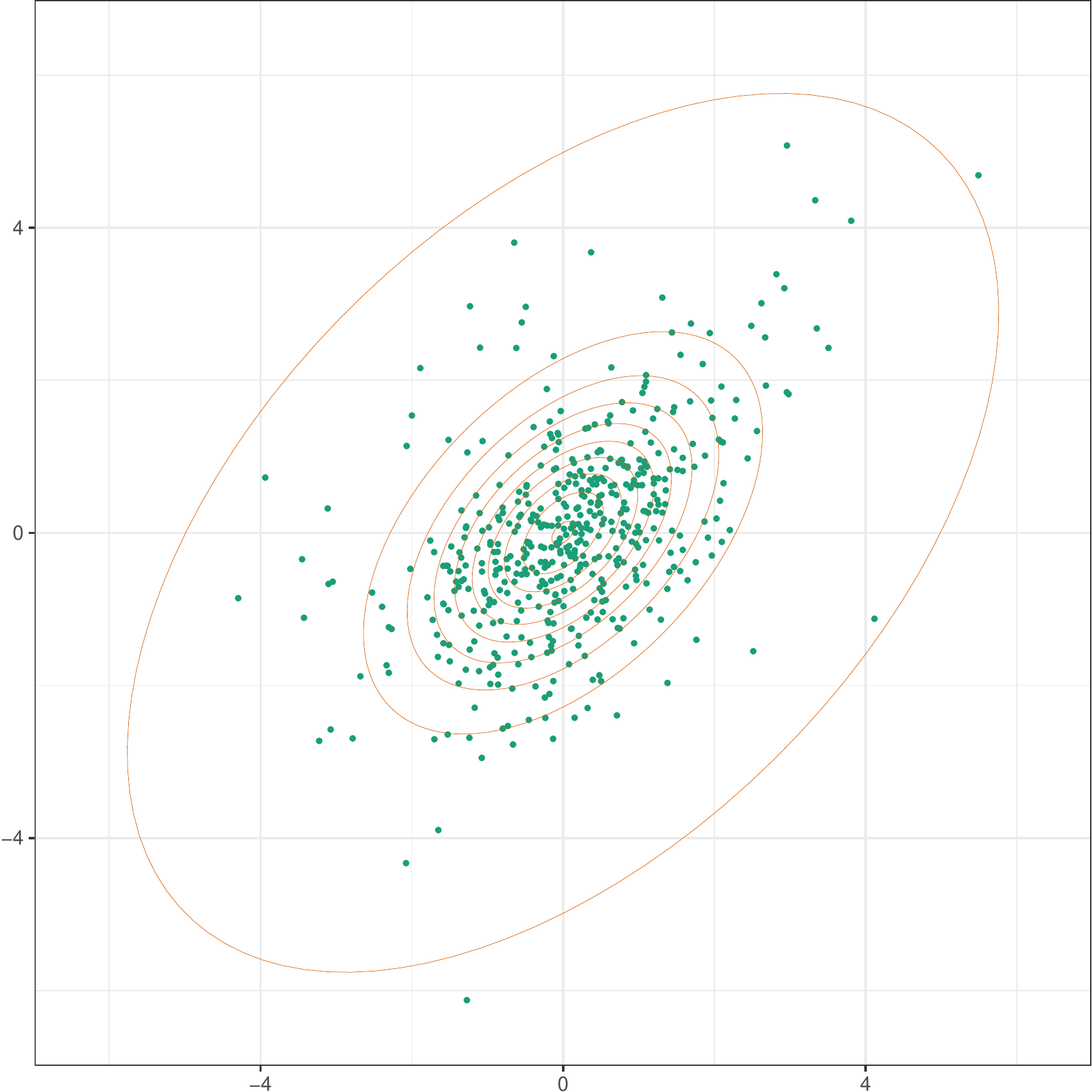}}
\subfloat[$\nu=15$]{\label{fig:cont:15}\includegraphics[width=0.25\textwidth]{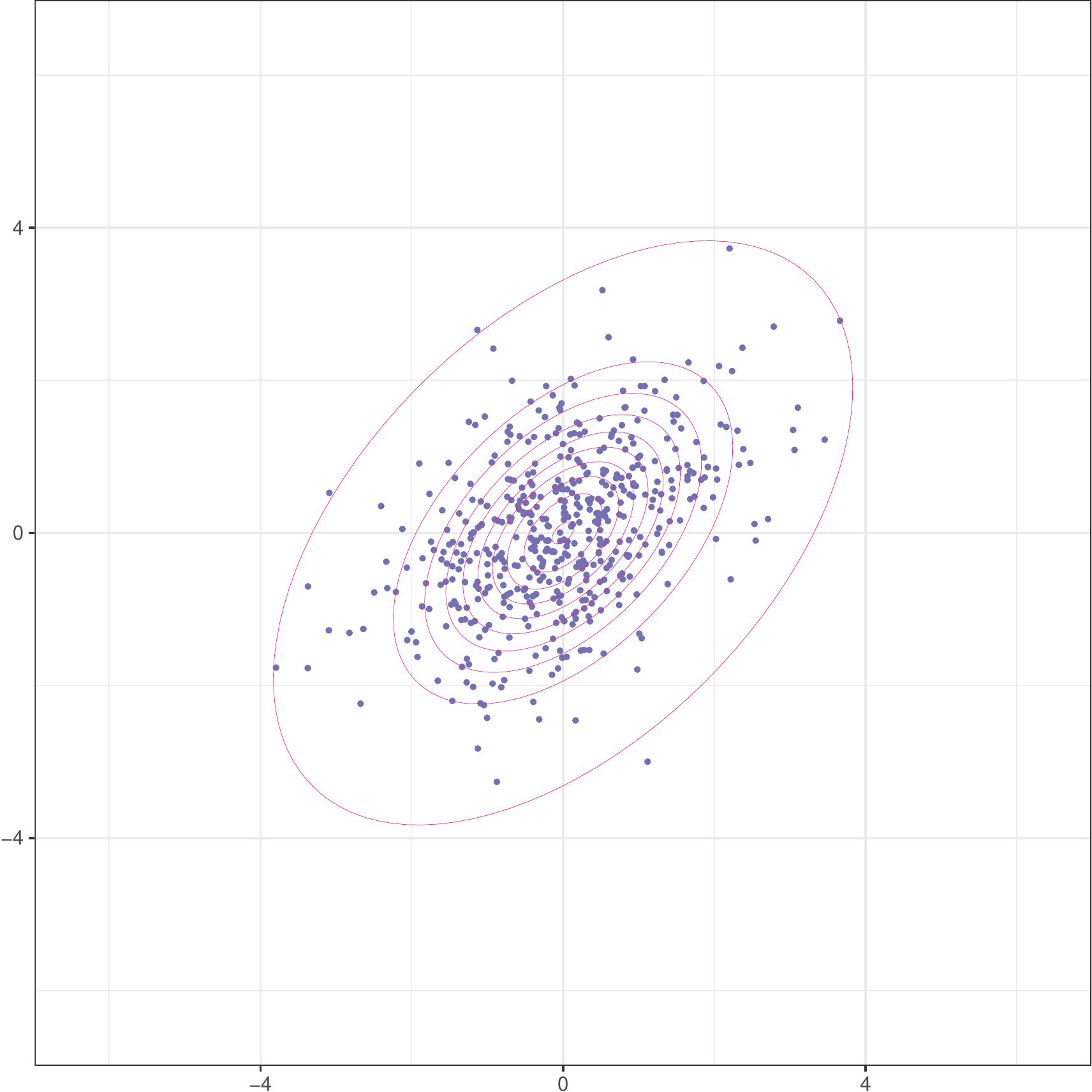}}}
\mbox{
\subfloat[$\nu=25$]{\label{fig:cont:25}\includegraphics[width=0.25\textwidth]{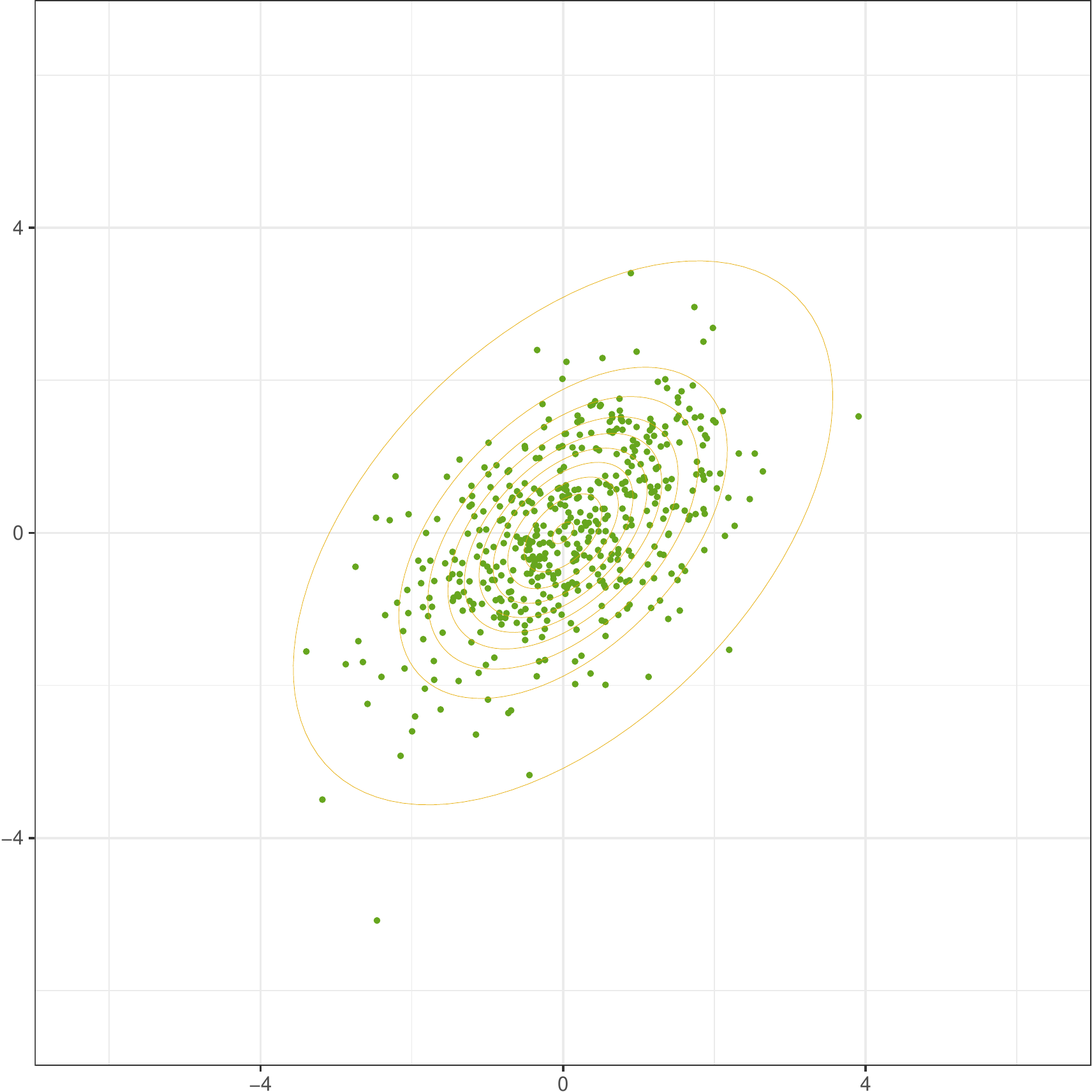}}
\subfloat[$\nu=\infty$]{\label{fig:cont:inf}\includegraphics[width=0.25\textwidth]{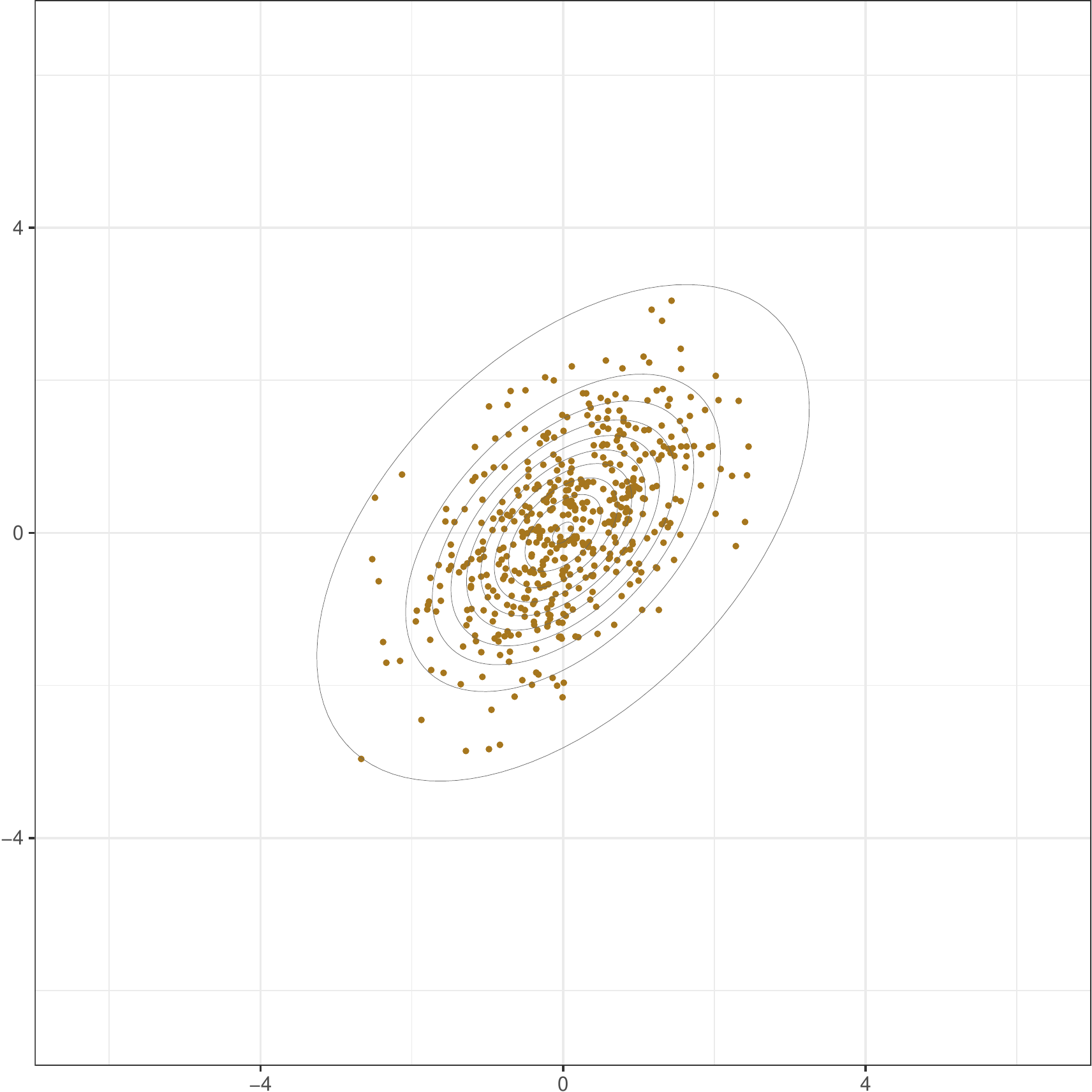}}}
\caption{Multivariate $t_\nu(\mu, \Sigma)$ densities and sample realizations for
  degrees of freedom $\nu=5$, 15,25 and $\infty$. Here $\mu = (0,0)$ while
  $\Sigma$ is the $2\times 2$ matrix with diagonal entries 1 and
  off-diagonal elements 0.5.} 
  \label{coutour.t}
\end{figure}
 The  $t_p(\mu, \Sigma; \nu)$ distribution has
 characteristic function as follows
 \begin{theorem}
 \label{t:cf}
Let $X$ be a $p$-dimensional random vector having the multivariate
$t$ distribution as per \eqref{mult:t}. Then the characteristic
function of $X$ is given by 
\begin{equation}
  \phi_X(t) =  \exp{(it'\mu)}\frac{2}{\Gamma(\frac{\nu}{2})}\Big(\frac{i\nu t'\Sigma t}{4} \Big)^{\frac{\nu}{4}}K_{\frac{\nu}{2}}(\sqrt{i\nu t'\Sigma t}),
  \label{eq:t:cf}
\end{equation}
where $K_{\frac{\nu}{2}}(s)$ is the modified Bessel function of the
second kind~\citep{abramowitzandstegun64} of order $\nu/2$ and $\Gamma (z) = \int_0^{\infty}\exp{\{-x\}} x^{z-1}dx$
is the gamma function.
\end{theorem}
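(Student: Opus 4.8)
The plan is to sidestep the density \eqref{mult:t} altogether and instead exploit the scale-mixture representation $X = \mu + Y\sqrt{\nu/S}$ introduced above, since characteristic functions behave well under both independence and conditioning. First I would write $\phi_X(t) = E\exp(it'X) = \exp(it'\mu)\,E\exp\!\big(i\sqrt{\nu/S}\,t'Y\big)$, pulling the deterministic shift $\mu$ out as the factor $\exp(it'\mu)$.

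Next I would condition on $S$. Because $Y$ and $S$ are independent and $Y\sim N_p(0,\Sigma)$, the conditional law of $t'Y$ given $S=s$ is $N(0,\,t'\Sigma t)$, so the inner conditional expectation is the Gaussian characteristic function evaluated at $\sqrt{\nu/s}\,t$, namely $\exp\!\big(-\tfrac{\nu}{2s}\,t'\Sigma t\big)$. Taking the tower expectation over $S$ then gives
\begin{equation}
  \phi_X(t) = \exp(it'\mu)\,E_S\!\Big[\exp\!\Big(-\tfrac{\nu\,t'\Sigma t}{2S}\Big)\Big].
  \label{eq:cf:sketch}
\end{equation}
Interchanging $E$ and the conditioning is justified by Fubini, since the integrand is bounded in modulus by $1$.

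The substantive step is evaluating the expectation in \eqref{eq:cf:sketch} against the $\chi^2_\nu$ density $f_S(s) \propto s^{\nu/2-1}\exp(-s/2)$. Writing $\beta = \nu\,t'\Sigma t/2$, this reduces to the integral $\int_0^\infty s^{\nu/2-1}\exp(-\beta/s - s/2)\,ds$, which is exactly the classical integral representation of the modified Bessel function of the second kind (Gradshteyn--Ryzhik 3.471.9): $\int_0^\infty s^{m-1}\exp(-\beta/s-\gamma s)\,ds = 2(\beta/\gamma)^{m/2}K_{m}(2\sqrt{\beta\gamma})$, with $m=\nu/2$ and $\gamma=1/2$. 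I expect this identification, together with the bookkeeping of the normalizing constant $1/[2^{\nu/2}\Gamma(\nu/2)]$ and the simplification $2\sqrt{\beta\gamma}=\sqrt{\nu\,t'\Sigma t}$, to be the main obstacle --- not conceptually deep, but the point where the $(\nu t'\Sigma t/4)^{\nu/4}$ prefactor, the order $\nu/2$ of the Bessel function, and the leading constant $2/\Gamma(\nu/2)$ in \eqref{eq:t:cf} all have to fall out consistently.

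Finally I would sanity-check the result. Since the $t_p(\mu,\Sigma;\nu)$ law is symmetric about $\mu$, the factor multiplying $\exp(it'\mu)$ must be real and even in $t$; the expression built from $\nu\,t'\Sigma t$ and $K_{\nu/2}(\sqrt{\nu\,t'\Sigma t})$ indeed is, which both confirms the algebra and pins down the correct branch when the argument of $K_{\nu/2}$ is written through a square root. As a further check, letting $\nu\to\infty$ the expression should collapse to the Gaussian characteristic function $\exp(it'\mu-\tfrac12 t'\Sigma t)$, consistent with the convergence in law noted above.
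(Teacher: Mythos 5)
Your proposal is correct, and its skeleton is the same as the paper's: represent $X = \mu + Y\sqrt{\nu/S}$, factor out $\exp(it'\mu)$, condition on $S$, and use the Gaussian characteristic function together with the tower property to reduce everything to computing $E_S\left[\exp\left\{-\nu t'\Sigma t/(2S)\right\}\right]$. Where you genuinely diverge is in this last step. The paper disposes of the expectation by citing the characteristic function of the inverse-$\chi^2_\nu$ law (Bernardo and Smith), $\phi_W(r) = \frac{2}{\Gamma(\nu/2)}\left(-\frac{ir}{2}\right)^{\nu/4}K_{\nu/2}(\sqrt{-2ir})$, and evaluating it at $r=\nu t'\Sigma t/2$; you instead evaluate the real Laplace-transform integral $\int_0^\infty s^{\nu/2-1}\exp(-\beta/s - s/2)\,ds$ directly against the $\chi^2_\nu$ density via the Bessel representation (Gradshteyn--Ryzhik 3.471.9). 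Your bookkeeping does fall out consistently: with $\beta = \nu t'\Sigma t/2$, $m=\nu/2$, $\gamma = 1/2$, the integral equals $2(2\beta)^{\nu/4}K_{\nu/2}(\sqrt{2\beta})$, and dividing by the normalizer $2^{\nu/2}\Gamma(\nu/2)$ gives $\frac{2}{\Gamma(\nu/2)}\left(\frac{\nu t'\Sigma t}{4}\right)^{\nu/4}K_{\nu/2}\left(\sqrt{\nu t'\Sigma t}\right)$.

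What your route buys is worth making explicit: it stays on the real line throughout and produces the standard, manifestly real form of the characteristic function --- no $i$ inside the power or inside the Bessel argument --- and your symmetry sanity check is exactly the reason it must be real. Note that this differs from the paper's stated formula \eqref{eq:t:cf}, which carries factors of $i\nu t'\Sigma t$. The discrepancy is a defect of the paper's derivation, not of yours: the paper writes the inner conditional expectation as $\exp\left\{-i\nu t'\Sigma t/(2S)\right\}$, whereas the Gaussian characteristic function evaluated at $t\sqrt{\nu/S}$ is $\exp\left\{-\nu t'\Sigma t/(2S)\right\}$ with no $i$; that spurious $i$ (together with a sign slip in where $\phi_W$ is evaluated) propagates into \eqref{eq:t:cf}, which as written, under principal branches, is not even real and so cannot be the characteristic function of a distribution symmetric about $\mu$. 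Your argument establishes the correct expression, namely \eqref{eq:t:cf} with $i\nu t'\Sigma t$ replaced by $\nu t'\Sigma t$, and in doing so quietly repairs the theorem as stated.
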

\begin{proof}
From definitions of the characteristic function of $X$
and the expectation of functions of $X$, we have $\phi_X(it) =
E_X\{\exp{(it'X)}\} = E_{Y,S}[\exp{\{it'(\mu+Y\sqrt{\nu/ S})\}}] =
\exp{(it'\mu)}E_{Y,S}\{\exp{(it'Y\sqrt{\nu/ S})}\}.$ 
But from the definition of joint expectations,
$E_{Y,S}\{\exp{(it'Y\sqrt{\nu /S})}\} =  E_S\{E_{Y\mid 
  S}[\exp{\{(it'Y\sqrt{\nu/ S})\}}]\}$. Also $Y$ is independent of
$S$, so  the inner conditional expectation is given by $E_{Y\mid   
  S}[\exp{\{(it'Y\sqrt{\nu/ S})\}}] = \exp{\{{-i\nu t'\Sigma
    t}/({2 S})\}}$. We now use the result~\citep[see][pages 119 and 431]{bernardoandsmith93} that if
$S \sim \chi^2_\nu$, then $W=1/S$ has the inverse-$\chi_\nu^2$
distribution with characteristic function given by
\begin{equation}
\phi_W(r) = \frac{2}{\Gamma(\frac{\nu}{2})}\Big(-\frac{
ir}{2} \Big)^{\frac{\nu}{4}}K_{\frac{\nu}{2}}(\sqrt{-2ir}).
\end{equation}
Then, $E_S[\exp{\{{-i\nu t'\Sigma
    t}/({2S})\}}]$ is the same as evaluating $\phi_W(r)$ at $r=
\nu t'\Sigma t/2$. The theorem follows.
\end{proof}
\begin{corollary}
\label{corr:t}
Let $X$ be a $p$-dimensional random vector from the multivariate
$t$-density $t_p(\mu, \Sigma; \nu)$. Let $X^{(q)}$ be the first $q
(\leq p)$
coordinates in $X$. Then $X^{(q)}\sim t_p(\mu^{(q)}, \Sigma^{(q\times
  q)}; \nu)$, where $\mu^{(q)}$ is the vector with the first $q$
coordinates of $\mu$ and $\Sigma^{(q\times q)}$ is the 
$q\times q$ symmetric matrix containing the first $q$ rows and column
entries of $\Sigma$. 
\end{corollary}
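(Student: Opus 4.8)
The plan is to read off the marginal law directly from the characteristic function established in Theorem~\ref{t:cf}, exploiting the elementary fact that the characteristic function of a sub-vector is obtained by zeroing out the conjugate arguments. First I would write $t = (s', 0')'$, where $s\in\mathbb{R}^{q}$ is paired with the retained coordinates and the trailing $p-q$ entries vanish. Then $\phi_{X^{(q)}}(s) = E\{\exp(is'X^{(q)})\} = E\{\exp(i(s',0')X)\} = \phi_X((s',0')')$, so the marginal characteristic function is just $\phi_X$ restricted to this subspace.

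The key simplification is that under this substitution $t'\mu = s'\mu^{(q)}$, and, partitioning $\Sigma$ into blocks conformally with the split of the coordinates, the quadratic form collapses to $t'\Sigma t = s'\Sigma^{(q\times q)} s$, because the cross-blocks and the trailing diagonal block of $\Sigma$ are annihilated by the zero entries of $t$. Substituting these two identities into \eqref{eq:t:cf} gives
\[
  \phi_{X^{(q)}}(s) = \exp(is'\mu^{(q)})\,\frac{2}{\Gamma(\nu/2)}\Big(\frac{i\nu\, s'\Sigma^{(q\times q)}s}{4}\Big)^{\nu/4} K_{\nu/2}\!\big(\sqrt{i\nu\, s'\Sigma^{(q\times q)}s}\big),
\]
which is exactly expression \eqref{eq:t:cf} for a multivariate $t$ distribution of dimension $q$ with location $\mu^{(q)}$, scale $\Sigma^{(q\times q)}$, and the \emph{same} degrees of freedom $\nu$. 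That $\nu$ is carried through unchanged is the substantive content of the claim: marginalization preserves the tail index. Uniqueness of characteristic functions then yields $X^{(q)}\sim t_q(\mu^{(q)}, \Sigma^{(q\times q)}; \nu)$ (the exponent on the dimension in the statement should read $q$, not $p$).

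I expect no genuine obstacle; the only point demanding care is verifying that the quadratic form truly reduces as claimed, which amounts to checking that only the leading $q\times q$ block of $\Sigma$ survives when the trailing entries of $t$ are set to zero. As an independent cross-check, one can instead use the stochastic representation $X = \mu + Y\sqrt{\nu/S}$ with $Y\sim N_p(0,\Sigma)$ and $S\sim\chi^2_\nu$ independent: restricting to the first $q$ coordinates gives $X^{(q)} = \mu^{(q)} + Y^{(q)}\sqrt{\nu/S}$, and since the Gaussian marginal satisfies $Y^{(q)}\sim N_q(0,\Sigma^{(q\times q)})$ and remains independent of $S$, this is precisely the defining representation of a $t_q(\mu^{(q)},\Sigma^{(q\times q)};\nu)$ variate, confirming the result.
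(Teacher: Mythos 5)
Your proposal is correct and takes essentially the same approach as the paper's proof: substituting $t' = (t_1,\ldots,t_q,0,\ldots,0)$ into the characteristic function \eqref{eq:t:cf} of Theorem~\ref{t:cf} and invoking uniqueness of characteristic functions, with your block-partition computation merely spelling out the collapse of $t'\mu$ and $t'\Sigma t$ that the paper leaves implicit. Your supplementary cross-check via the stochastic representation $X = \mu + Y\sqrt{\nu/S}$ and your observation that the statement's $t_p$ should read $t_q$ are both correct additions, but do not change the fact that the core argument coincides with the paper's.
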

\begin{proof}
Setting $t' = (t_1, t_2, \ldots, t_q, 0, 
0, \ldots,0)$ in \eqref{eq:t:cf} yields the characteristic function
that is uniquely that of the $t_{q}(\mu^{(q)}, \Sigma^{(q \times q)},
\nu)$ density and the result follows.
\end{proof}
 
\subsubsection{MBC with $t$-mixtures}
\label{sec:tmmbc}
MBC is an effective and principled method of obtaining  groups of
similar observations  in a dataset. It scores over partitioning
algorithms like $k$-means mainly in that it is not
restricted by the inherent assumption of homogeneous spherically dispersed groups. Assuming
spherically-dispersed groups when they are really
non-spherical can lead to erroneous results \citep[see][for a
comprehensive review on the pitfalls of using $k$-means when
assumptions are not met.]{chattopadhyayandmaitra17} In
MBC~\citep{melnykovandmaitra10,mclachlanandpeel00,fraleyandraftery02}
the  observations $X_1,X_2,\ldots,X_n$ are assumed to be
realizations from a \textit{K}-component  mixture
model~\citep{mclachlanandpeel00} with PDF 
 \begin{equation}
 	f(x;\theta) = \sum_{k=1}^{\textit{K}}\pi_k f_k(x; \eta_k)
    \label{mixedmodel}
 \end{equation}
 where $f_k(\cdot; \eta_k)$ is the density of the $k$th group,
 $\eta_k$ the vector of unknown parameters and $\pi_k  = Pr[x_i \in
 \mathit{G}_k]$ is the
 mixing proportion of the $k$th group, $k=1,2,\ldots,K$,  and
 $\sum_{k=1}^{\textit{K}}\pi_k  = 1$. For convenience, we write
 $\theta$ as the set of of all the model parameters. The most popular mixture model
 is the Gaussian Mixture Model (GMM), where the component  densities
 are assumed to be multivariate Gaussian 
 $\phi(x;\mu_k,\Sigma_k)$ with mean $\mu_k$ and dispersion
 $\Sigma_k$. Imposing different constraints on the densities (mostly
 on the dispersion matrices) gives rise to a family of mixture models
that are more parsimonious compared to the fully
 unconstrained model. The popular  
{\tt MCLUST} GMM family of~\citet{fraleyandraftery98,fraleyandraftery02} uses an
 eigen-decomposition of the component's variance covariance
 matrices. Thus, they write $\Sigma_k =
 \lambda_kB_k\Lambda_kB_k^T$, where $\Lambda_k$ is a diagonal matrix
 with values proportional to the eigen values of $\Sigma_k$, $B_k$
 denotes the matrix of eigen vectors of $\Sigma_k$ and $\lambda_k$ is
 the constant of proportionality. The {\tt MCLUST} family
 ~\citep{fraleyandraftery98,fraleyandraftery02} has 17 GMMs
 obtained by imposing certain constraints on
 $\lambda_k$, $B_k$ and $\Lambda_k$ and is implemented using the
 R~\citep{R} package {\tt MCLUST}~\citep{mclust1,mclust2}. Another
 useful class of 
 mixture models is obtained when the component cluster densities are assumed
 to follow a multivariate {\it t}-distribution rather than a
 multivariate Gaussian distribution. Motivated by
 \citet{mclachlanandpeel98}, these mixture models perform better than 
 GMM when it is plausible for each group has some extreme 
 observations.~\citet{jefferyandmcnicholas12} proposed a set of
 multivariate {\it t} mixture models({\it t}MM)  by imposing the same
 constraints as {\tt 
   MCLUST} plus additional constraints on $\nu$ to provide
  24 multivariate {\it t}MMs. These models are
  implemented in the {\tt TEIGEN}
 package~\citep{teigen2} in R~\citep{R}.  
 
The most common method of estimating the parameters of a mixture model
is the Expectation Maximization (EM)
Algorithm~\citep{dempsteretal77,mclachlanandkrishnan08}, which is an
iterative method for finding maximum likelihood estimates (MLEs) in
incomplete data scenarios. \citet{jefferyandmcnicholas12} used a variant
of EM known as the Expectation Conditional Maximization (ECM)
algorithm~\citep{mengandrubin93} to estimate the parameters of the
{\it tMM}. Faster
modifications~\citep{mengandvandyk97,chenandmaitra11} exist but they
have the same idea as ECM in that they replace the M-step of EM with
a sequence of  $D$ 
conditional maximization (CM) steps. Thus, the vector of parameters 
$\theta$ is partitioned into
$D$ sub-vectors $\theta = (\theta_1, \theta_2,\ldots, \theta_D)$ and 
the maximization done in $D$ steps, where the $d$th CM step maximizes
the Q function (or the expected complete log likelihood function given
the observations) over  $\theta_d$ but while keeping the other
sub-vectors fixed at some previous  
value.  ECM is  computationally more 
efficient than EM and  also  shares its desirable convergence
properties~\citep{mengandrubin93}. We now outline ECM estimation in
the context of $t$MMs.

Let $\zeta_{ik}$ be indicator variables that denote the cluster
membership of the $i$th observation. Thus
\begin{equation}
  \zeta_{ik}=\begin{cases}
    1, & \text{if the $i$th observation $x_i$ belongs to the $k$th group}\\
    0, & \text{otherwise}.
  \end{cases}
  \label{teig:miss}
\end{equation}
Note that $\zeta_{ik}$s  are unobserved and estimating them is the
major objective of MBC. In the context of the $t$MM, there is an
additional set of missing values
$v_{ik},i=1,2,\ldots,n;k=1,2,\ldots,K$
that are realizations from the Gamma density with PDF
\begin{equation}
\gamma(v_{ik};\nu_k /2,\nu_k /2) = \frac{\nu_k^{\frac{\nu_k}{2}} {v_{ik}}^{\frac{\nu_k}{2}-1} \exp{({-\frac{\nu_kv_{ik}}2})}}{2^{\frac{\nu_k}{2}}\Gamma (\frac{\nu_k}{2})}.
\label{den:gamma}
\end{equation}
Then for the  $t$MM  the complete data loglikelihood function can be written as
\begin{equation}
\begin{split}
\ell(&\pi, \mu, \Sigma, \zeta) \\
&= \sum_{k=1}^K \sum_{i=1}^n \zeta_{ik} \log\{\pi_k \phi(x_i;\mu_k,
\Sigma_k/v_{ik})\gamma(v_{ik};\nu_k /2,\nu_k /2)\} \\
\end{split}
\label{teig:likelihood}
\end{equation}
where $\nu_k$ denotes the degrees of freedom of the $t$-density for the $k$th group, $\phi$ denotes Gaussian density with mean $\mu_k$ and variance $\Sigma_k /v_{ik}$.
The $t$MM likelihood is maximized through the following steps:
\begin{enumerate}
\item {\em Initialization}. Let $\{(\pi_k^\circ, \mu_k^\circ,
  \Sigma_k^\circ, \nu_k^\circ); k=1, 2,\ldots,K\}$ be the initializing
  parameters. 
\item {\em E-step updates}. The component weights $v_{ik}$ and the
  group indicator variables $\zeta_{ik}$ are updated as
\begin{eqnarray*}
\hat{v}_{ik}&=&\frac{\nu_k^\circ + p}{\nu_k^\circ+ (x_i-\mu_k^\circ)^T\Sigma_k^{\circ -1}(x_i-\mu_k^\circ)}\\
\hat{\zeta}_{ik}&=&\frac{\pi_k^\circ f_t (x_i;\mu_k^\circ, \Sigma_k^\circ, \nu_k^\circ)}{\sum_{k=1}^K\pi_k^\circ f_t (x_i;\mu_k^\circ, \Sigma_k^\circ, \nu_k^\circ)}
\end{eqnarray*}
where $f_t (x_i;\mu_k^\circ, \Sigma_k^\circ, \nu_k^\circ)$ denotes a multivariate {\it t}-density with mean $\mu_k$, dispersion matrix $\Sigma_k$ and degrees of freedom $\nu_k$.
\item{\em CM-step 1}. The first CM step updates the component means
  $\mu_k$s and the prior probabilities $\pi_k$s:
\begin{eqnarray*}
\hat{\pi}_k&=&\frac{\hat n_k}{n}\\
\hat{\mu}_k&=&\frac{\sum_{i=1}^n\hat{\zeta}_{ik}\hat{v}_{ik}x_i}{\sum_{i=1}^n\hat{\zeta}_{ik}\hat{v}_{ik}}\\
\end{eqnarray*}
where $\hat n_k =\sum_{i=1}^n\hat{\zeta}_{ik}$. Additionally, if  the
constraint $\nu_k \equiv \nu$ is imposed upon the degrees of freedom of each
group, then $\hat\nu$ is updated here by solving the equation
\begin{equation}
\begin{split}
1-\psi\Big(\frac{\hat{\nu}}{2}\Big)+\frac{1}{n}&\sum_{k=1}^K\sum_{i=1}^n\hat{\zeta}_{ik}(\log\hat{v}_{ik}-\hat{v}_{ik})+\log\Big(\frac{\nu}{2}\Big)\\&+\psi\Big(\frac{\nu^\circ+p}{2}\Big)-
\log\Big(\frac{\nu^\circ+p}{2}\Big)=0.
\end{split}
\end{equation}
where $\hat{\nu}$ denotes the updated degrees of freedom and $\nu^\circ$ denotes the current estimate.
\item{\em CM-step 2}. This step updates the $\Sigma_k$s and varies
  accordingly as
  per constraints imposed in the modeling. For example, setting
  $\Lambda_k=\Lambda$ and $B_k=B$ yields the updates 
\begin{equation}
\hat{\lambda}_k=\frac{1}{p\hat n_k} trace\Big(S_k H^{-1}\Big)
\end{equation}
where $S_k=\frac{1}{\hat
  n_k}\sum_{i=1}^n\hat{\zeta}_{ik}\hat{v}_{ik}{||x_i-\hat{\mu}_k||}^2$
and $H=B \Lambda B^T$.  In order to update $B$ and $\Lambda$, $H$ is
first updated using using 
\begin{equation}
H=\frac{\frac{1}{\lambda_k}\sum_{k=1}^K S_k}{{|\frac{1}{\lambda_k}\sum_{k=1}^K S_k|}^\frac{1}{p}}
\label{ecm:D}
\end{equation}
Now the updated $B$ and $\Lambda$ are obtained from~\eqref{ecm:D}.
\item Alternate between the E- and CM-steps till convergence.
\end{enumerate}
After obtaining the final estimates of the parameters, the $i$th data
point $X_i$ is assigned to the class for which the converged E-step
posterior probability is the highest, that is $X_i$ is assigned to class
$k$ where $k =\argmax_{l}\hat{\zeta}_{il}$.

Our $t$-mixtures MBC ($t$MMBC) formulation above assumes a known number
of components $K$. With unknown $K$, a  popular approach of finding it
is the Bayesian Information Criterion (BIC)~\citep{schwarz78}
that subtracts  $(m\log n)/2$ from the maximized log-likelihood
(obtained from the converged ECM), with $m$ the
number of unconstrained parameters in the fitted $K$-component
$t$MM. ~\citep[See][for a detailed review of
BIC.]{chattopadhyayandmaitra17} 

\subsubsection{Illustration}
\label{illustrate:t-mix}
The purpose of this section is to demonstrate using a simulated
dataset, potential pitfalls in fitting a GMM on data that are plausibly
realizations from a multivariate {\it tMM}. For this purpose, we
simulated three 4-component multivariate {\it t}MMs having an
approximate generalized overlap~\citep{maitra10,melnykovandmaitra11}
of $\ddot{\omega} =0.01$ and $\nu=5, 15, 25$ degrees of freedom. The
generalized overlap measure 
is an effective way of summarizing the overlap matrix $\Omega$ of
\citet{maitraandmelnykov10}  which
is a $K \times K$ matrix whose $(i, j)th$ element contains the sum of
misclassification probabilities between $i$th and $j$th clusters. The
generalized overlap summarizes this matrix and is 
defined as $\ddot{\omega} = (\lambda_{(1)}-1)/(K-1)$ where
$\lambda_{(1)}$ is the largest eigen value of $\Omega$. Small values
of $\ddot{\omega}$ are likely to indicate more distinct
groupings.  \citep[See also][for further
details.]{chattopadhyayandmaitra17}.
 The {\tt MixGOM()} function in the R package {\sc MixSim}
 \citep{melnykovetal12} can be adapted in the same 
manner as \citet{melnykovandmaitra11} to obtain realizations from a
multivariate $t$MM with a specified $\ddot\omega$.

We fitted both {\it GMM} and {\it tMM} to the datasets, with  optimal
number of groups determined using BIC, and results displayed in Fig.
\ref{sim.gauss.t}. For $\nu=5$ and $\nu=10$ 
fitting a {\tt GMM} gives the optimal number of clusters to be five and
four respectively but for {\it tMM} the optimum number of clusters were
correctly chosen as three in both cases. For $\nu=25$, BIC identified the
number of clusters as three for both class of models. 
\begin{figure}
\includegraphics[width=0.5\textwidth]{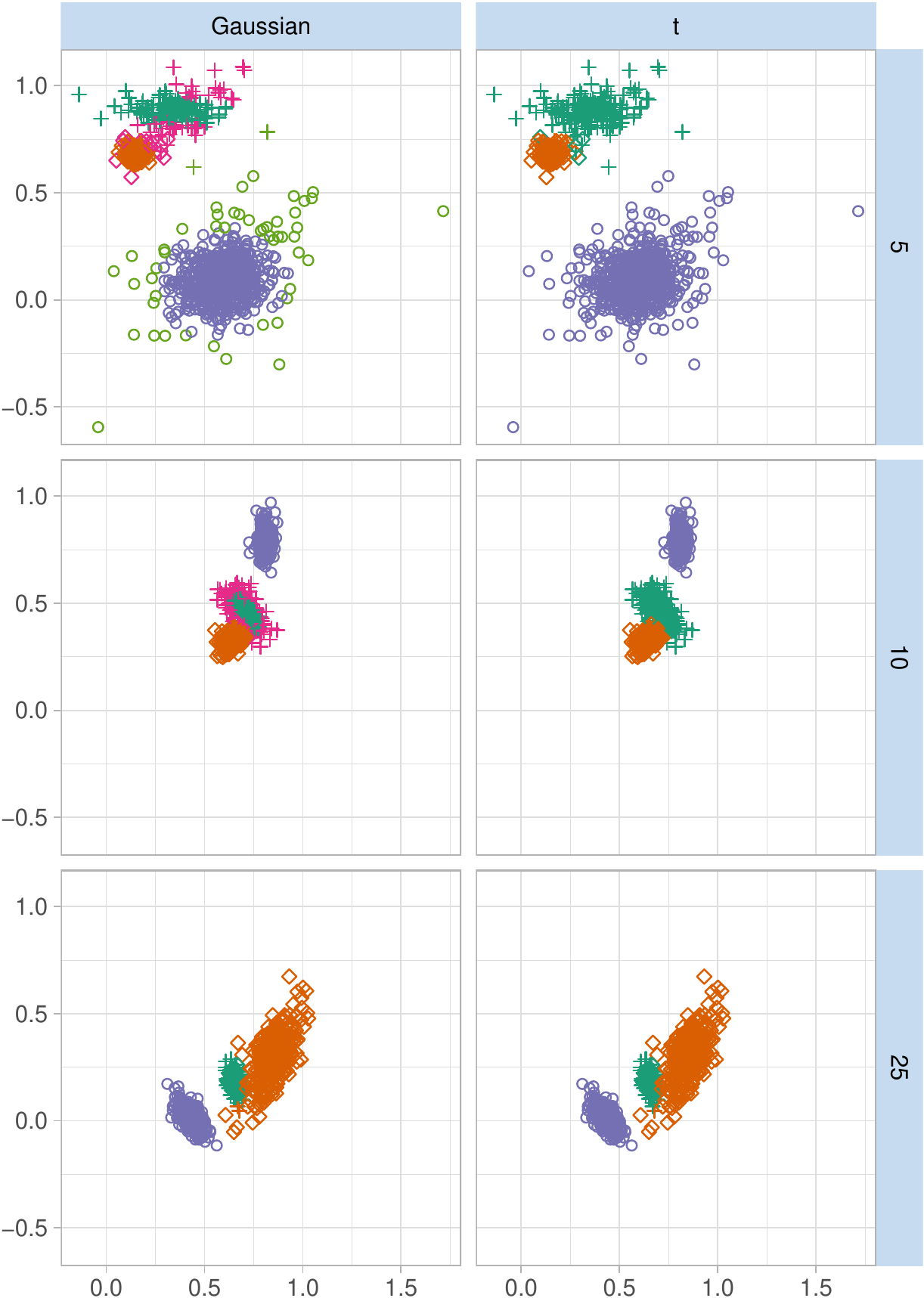}
\caption{Simulated datasets from a bivariate {\it t}MM with $\nu =5,
  10, 25$ in the top, middle and bottom panels, and clustering
  obtained using GMM (left-hand panel) and $t$MM (right-hand panel). In all
  figures, plotting character indicates true classification while
  color indicates estimated grouping.}
\label{sim.gauss.t}
\end{figure}
For numerical assessment of clustering performance,we computed the
Adjusted Rand index ($\mathcal R$) after fitting each of the 
two models to the data. As also explained by \citet{maitra01}, the Rand
index~\citep{rand71} is a measure of similarity between two different
clusterings and is calculated as follows. Suppose that for a
given data set $\mathcal{D}$ having $n$ elements there exists two
different partitions $P_1$ and $P_2$ by two different algorithms. Let
$a_1$ denote the number of pairs of object in $\mathcal{D}$ that are
in the same group in both partitions $P_1$ and $P_2$, and $a_2$ the
number of pairs of objects in $\mathcal{D}$ that are in different
groups. Then the Rand index is
defined as $(a_1 + a_2)/\binom{n}{2}$. The Rand Index takes values
between 0 and 1. The Adjusted Rand Index~\citep{hubertandarabie85} 
 is obtained by correcting
Rand Index for chance grouping of elements and can take values between
$-\infty$ and $1$.  We analyzed the quality of clustering for
both GMM and $t$MM by comparing the clustering results with the true
class indicators through $\mathcal R$ in order to
demonstrate issues of using a GMM model for clustering a data
supposedly originating from a {\it 
  t}MM, that is from a mixture model with potentially heavier tails.
The results are presented in Table \ref{tab:sim:data}.    
\begin{table}
\caption{Adjusted Rand Indices obtained by clustering three datasets simulated from a 3-component tMM with $\nu$ equal to  5, 10 and 25 using a Gaussian mixture model and a $t$ mixture model.}
\centering
\begin{tabular}{P{2.5cm} P{2.5cm} P{2.5cm}}
\hline
$\nu$ & Gaussian & $t$\\
\hline
5 & 0.84 & 0.99\\
10 & 0.79 & 0.99 \\
25 & 0.99 & 0.99  \\
\hline
\end{tabular}
\label{tab:sim:data}
\end{table}

Both Fig.~\ref{sim.gauss.t} and Table~\ref{tab:sim:data}  clearly
indicate that a {\it t}MM gives a better fit that a {\tt GMM}, for smaller $\nu$. For $\nu = 25$, 
the difference is negligible but enough to prove that {\it tMM}
wins over {\tt GMM}. Our examples here illustrate the value  of
considering a {\it t}MM when the underlying groups 
are potentially thicker-tailed.
\subsection{Variable Selection in Clustering}
\begin{figure*} 
  \begin{minipage}[c]{0.47\textwidth}
    \subfloat[Pairwise scatterplots, estimated marginal densities and
    pairwise correlations of simulated datasset.]{\label{fig:dim:scatter}\includegraphics[width=\textwidth]{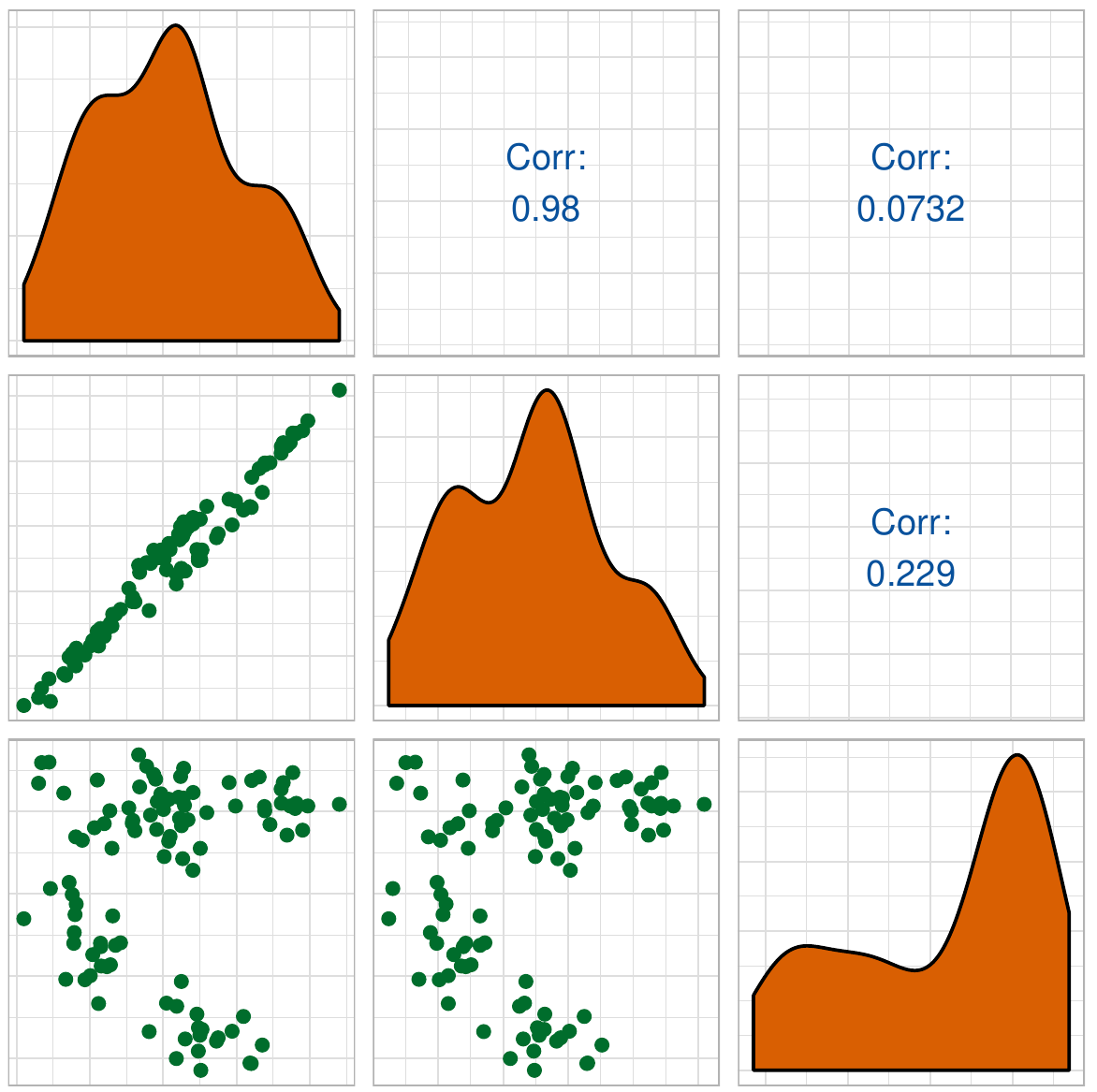}}
\end{minipage}\hspace{-0.01\textwidth}
\begin{minipage}[c]{0.53\textwidth}
\mbox{ 
\subfloat[$\hat K=2$, $\mR=0.37$]{\label{fig:dim:1:2}\includegraphics[width=0.333\textwidth]{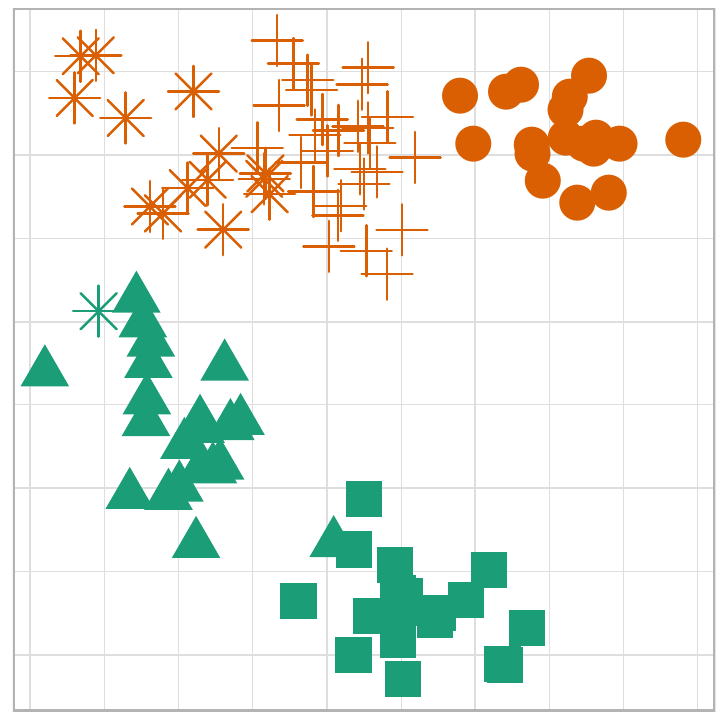}}
\subfloat[$\hat K=2$, $\mR=0.37$]{\label{fig:dim:1:3}\includegraphics[width=0.333\textwidth]{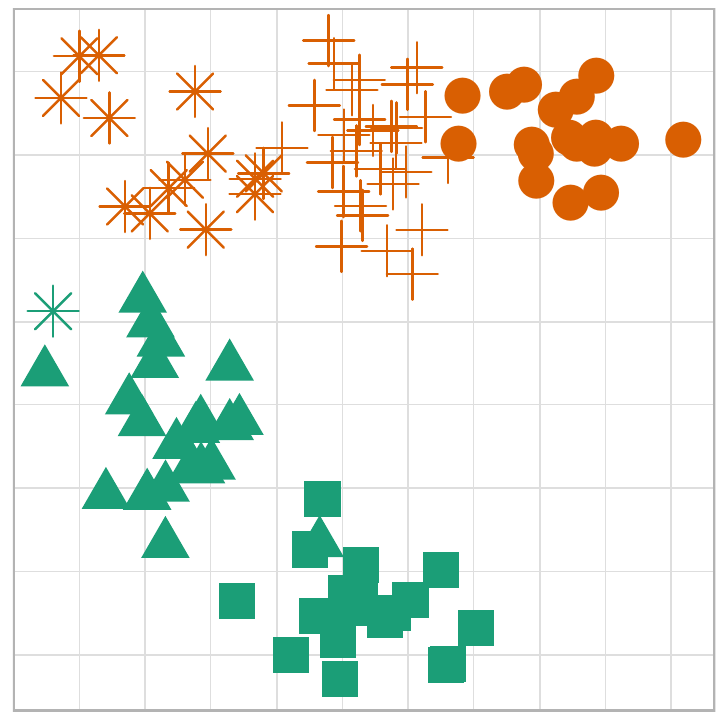}}
\subfloat[$\hat K=4$, $\mR=0.82$]{\label{fig:dim:2:3}\includegraphics[width=0.333\textwidth]{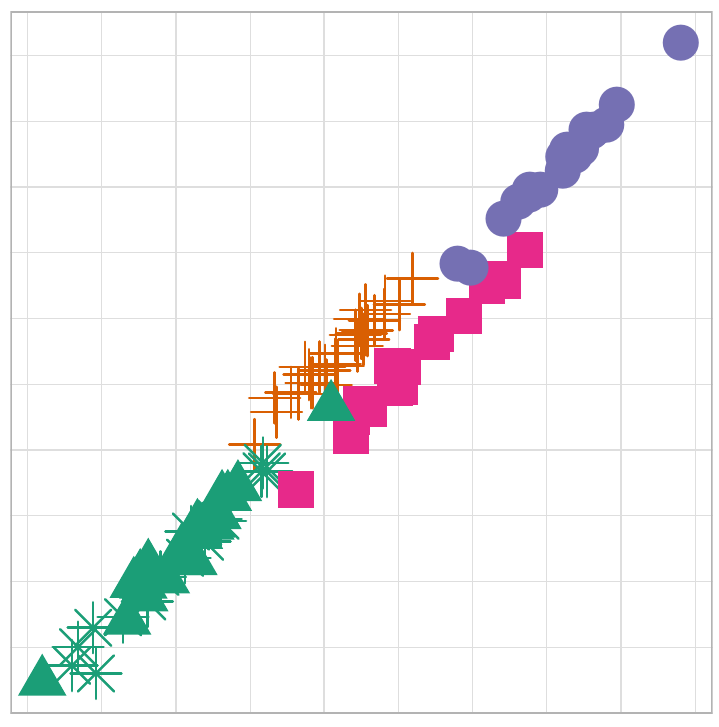}}
}
\centering
\mbox{
  \subfloat[Two views of clustering using all three dimensions: $\hat K = 5$, $\mR = 1.0$]{\label{fig:dim:all}\includegraphics[width=0.57\textwidth]{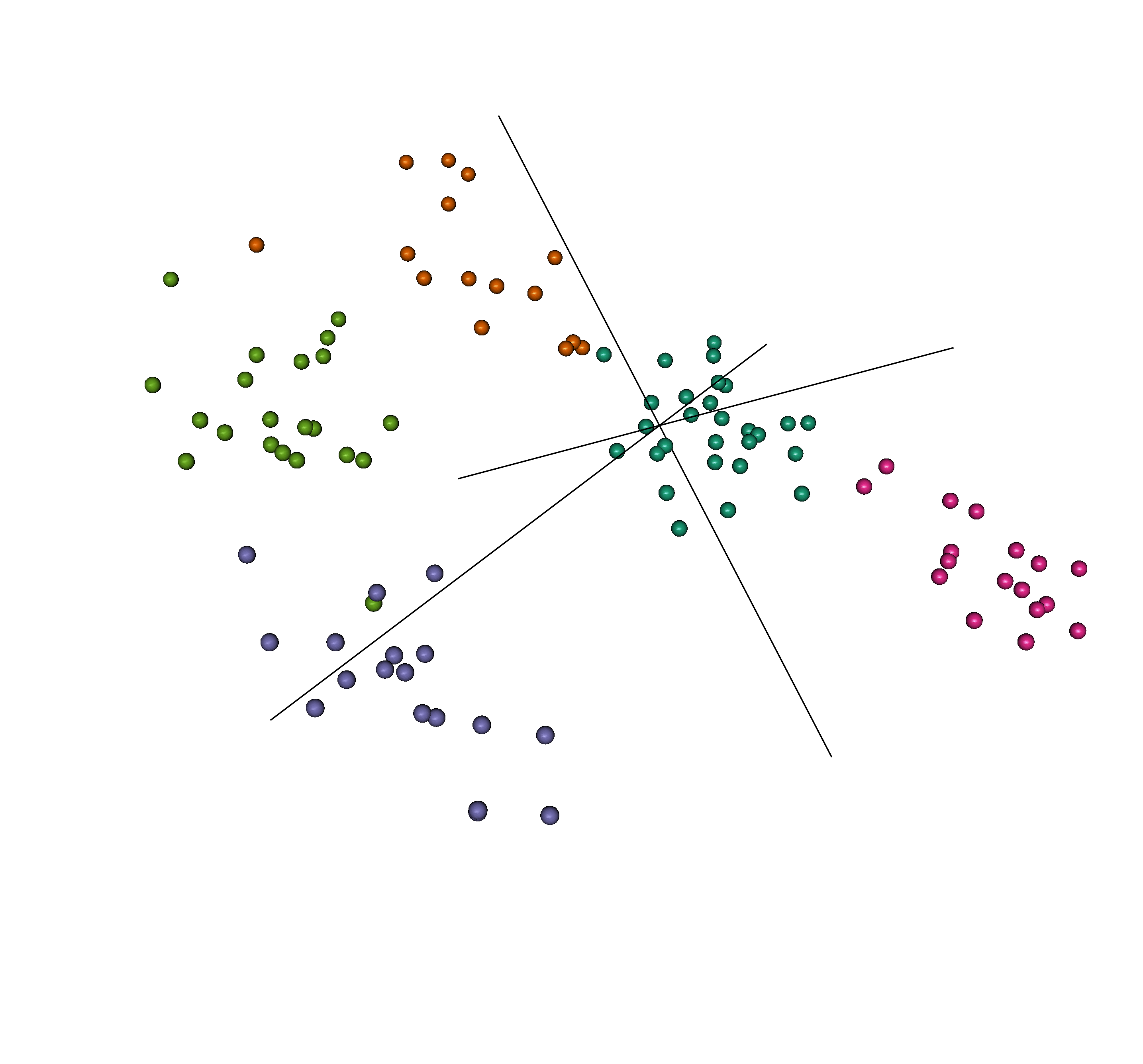}
    \hspace{-0.14\textwidth}\includegraphics[width=0.57\textwidth]{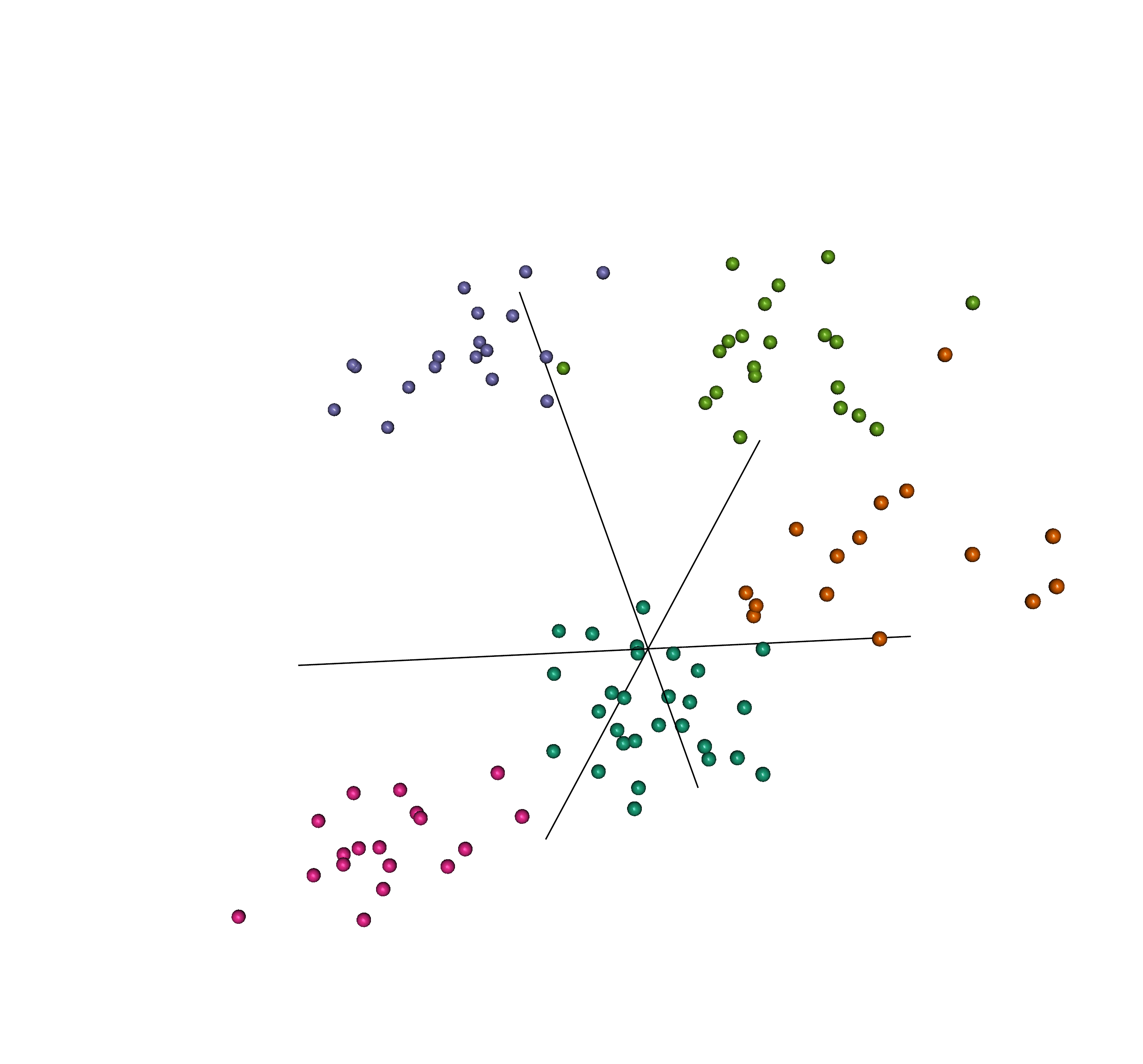}}
}
\end{minipage}
\caption{Result of clustering (a) data simulated  from a three-dimensional
  5-component {\tt GMM} using (b, c, d) two and (e) all three variables.}
\label{sim-norm-var-select}
\end{figure*}
\label{ov:variable selection}
Selection of relevant variables is a very important issue in
clustering. Incorporating redundant information can degrade overall
clustering performance producing less distinct
groups~\citep{chattopadhyayandmaitra17}. In the same way, exclusion of
variables having relevant 
information also degrades the overall quality of clustering. We
illustrate this problem (Fig.~\ref{sim-norm-var-select}) by means of a simulated three-dimensional
dataset drawn from a {\it GMM} with $K=5$ true and very well separated
components ($\ddot\omega = 10^{-5}$) 
and with marginal distributions and pairwise scatterplots as shown in the
diagonal and lower triangle of the matrix of plots in Fig.
\ref{fig:dim:scatter}. The upper triangle of these matrix displays
the correlation between each pairs of variables. 
Thus the first two of the three variables have a very high (almost linear)
pairwise correlation of 0.98, while the other two pairs have little to
modest correlations between them. It is tempting to surmise that one
of the first two dimensions are nearly redundant and dropping one of
them would not have much of an effect on the quality of
clustering. We test this assertion by performing GMMBC (with BIC used
to determine the optimal number of groups) on the dataset
using each of the three distinct pairs of
variables. Figs~\ref{fig:dim:1:2}-\ref{fig:dim:2:3} display the results of clustering each pair of
coordinates, with colors representing the obtained clustering and
character the true grouping. Including either one of the two
seemingly redundant variables with the third variable
identifies only two groups and a partitioning (Figs~\ref{fig:dim:1:2}
and \ref{fig:dim:1:3}) that hardly matches the true with an adjusted
Rand index ($\mR$) of 0.37. Interestingly, clustering the two
seemingly redundant variables (and ignoring the third) does better, identifying
four groups and with $\mR=0.82$. However, this is still a far cry from
the perfect partitioning that is obtained when all three coordinates
are used in GMMBC with BIC to determine the optimal number of groups.
Therefore, it is important to consider the relevant variables in
clustering. This fact becomes more important in the light of attempts
by many researchers to cluster GRBs using, for instance, only the
duration variables. This example also demonstrates that two variables
having correlations even as high as 0.98 are not
necessarily redundant and should be carefully analyzed before arriving at decisions 
on their inclusion or exclusion in analysis. 
\citet{rafteryanddean06} proposed a method for selecting variables
containing the most relevant clustering information by recasting the
variable selection problem in terms of model selection, where
comparison of the models is done via
BIC. \citep[refer to][for a thorough review on
variable selection.]{chattopadhyayandmaitra17}

\subsection{Measuring Distinctiveness of Partitioning Through the  Overlap}
\citet{chattopadhyayandmaitra17} explain that an overlap measure can
be used to indicate the extent to which  clusters obtained through a method are
  distinct from one another. We refer to that paper for further details,
  but note that they adopt \citet{maitraandmelnykov10}'s definition of
  the pairwise overlap between two groups as the sum of their
  misclassification probabilities. To provide a sense of these
  pairwise overlap measures, we illustrate  three two-dimensional examples in
  Fig.~\ref{overlap-contour}. (For ease of display and
  understanding, we use two dimensions here, but the general idea is
  the same for all dimensions.) In each case, we used the {\tt MixSim}
  package to sample 1000
  observations from two-component GMMs but with pairwise overlaps of
  $\omega=10^{-3}$, 0.05 and 0.1, respectively. 
\begin{figure}
\mbox{ 
\subfloat[$\omega = 10^{-3}$]{\label{ov:contour:0}\includegraphics[width=0.16\textwidth]{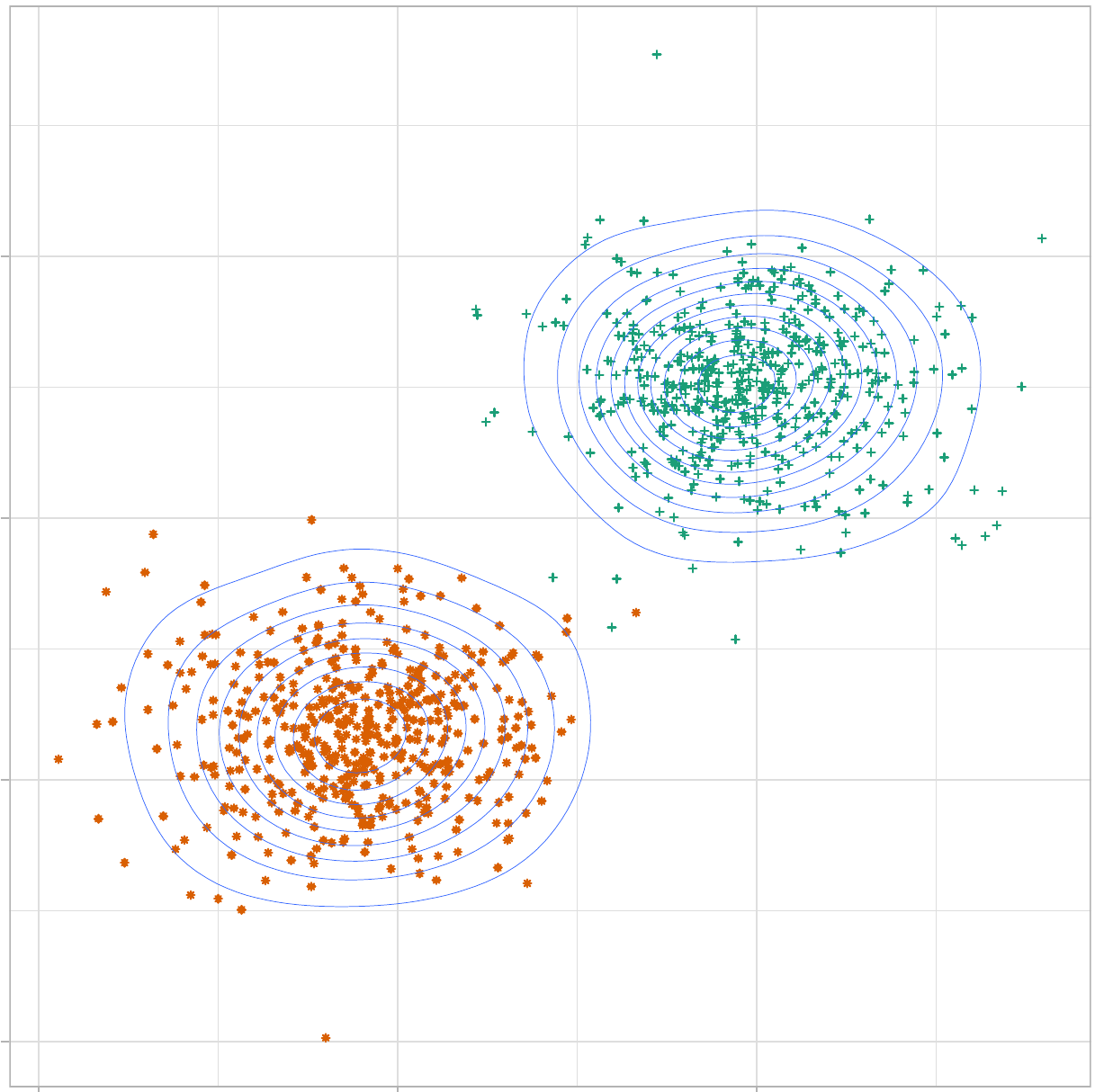}}
\subfloat[$\omega = 0.05$]{\label{ov:contour:0.04}\includegraphics[width=0.16\textwidth]{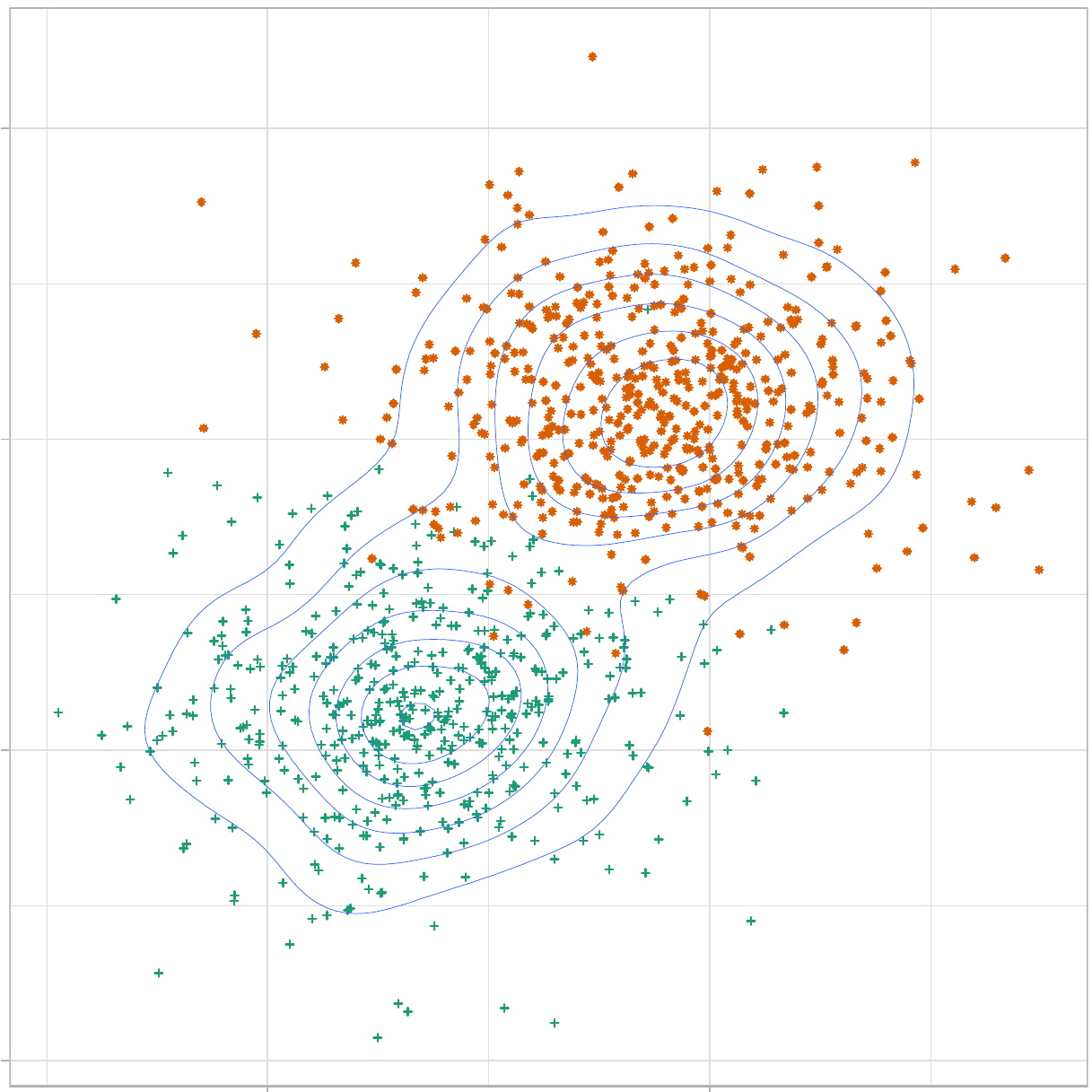}}
\subfloat[$\omega = 0.1$]{\label{ov:contour:0.1}\includegraphics[width=0.16\textwidth]{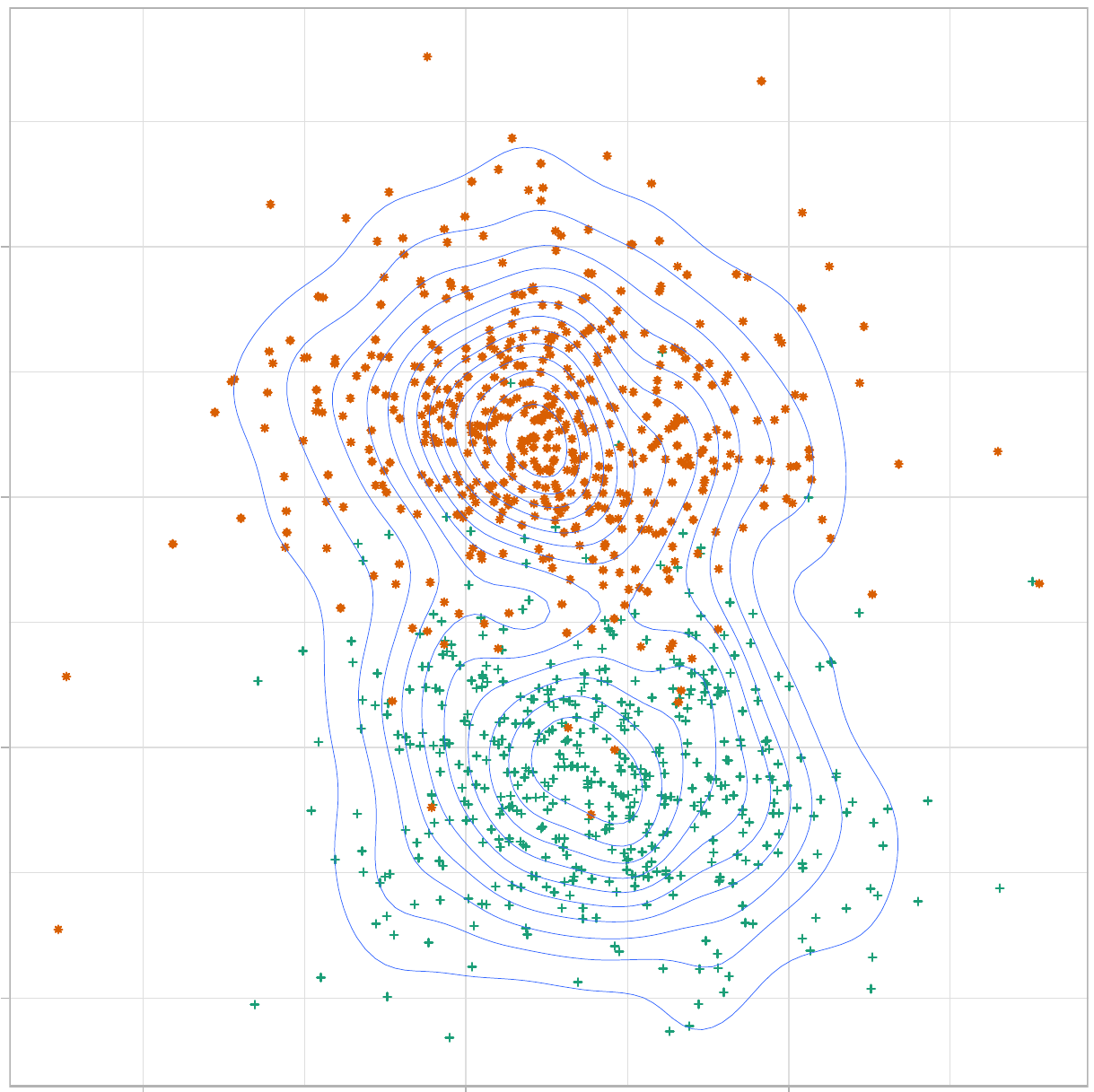}}}
\caption{Sample realizations, along with estimated contour densities,
  from two-component  $t$MMs with three   different overlap values.}
  \label{overlap-contour}
\end{figure}
In each figure, we display the observations from each group by means
of color and character and also provide an estimated bivariate density
through a contour plot. The contours are totally separated
for the data with negligible pairwise overlap
(Fig.~\ref{ov:contour:0}). This separation decreases with increasing
overlap (Figs  \ref{ov:contour:0.04} and \ref{ov:contour:0.1}) and
the groups become less distinct. 
\subsection{Classification}
\label{ov:classification}
The objective of classification is to classify a new observation to
one of $K$ well-defined groups. Classification is a supervised
learning method that uses training data to determine a rule that 
classifies a new observation to one of $K$ groups.  Bayes' rule is
often used to obtain classification rules in the model-based
context. Here an observation $x$ is assigned to  the $l$th group if
the posterior probability of $x$ belonging to the $i$th group is
the highest amongst all groups under considerations. Thus, the decision rule
is to classify $x$ to group $i$ if $\pi_i f_i(x) > \pi_k f_k(x)$ for
all $k\neq i$, where $\pi_j$ is the prior probability that an observation
belongs to the $j$th group and $f_j(x)$ denotes the PDF of the $j$th group at $x$ \citep[see][for more details.]{johnsonandwichern88}. In case of tMM $f_i(x)$ denotes a
multivariate-$t$ PDF. 
Our proposal is to cluster the observations for which all parameters
are observed and to use the estimated parameters to classify the GRBs
for which not all parameters are observed. We will use the above
methods for developing our classification rule for observations with
missing records. 
\section{Cluster Analysis of GRBs}
\label{GRB:1599:full}
The BATSE catalogue is widely used for analysis
of GRBs and has temporal and spectral information of
GRBs from 1991 to 2000. A few of the parameters have been
of interest to researchers for grouping GRBs. These are:
\begin{description}
\item[$T_{50}$:]{the time by which 50\% of the flux arrive.}
\item[$T_{90}$:]{the time by which 90\% of the flux arrive.}
\item[$P_{64}$, $P_{256}$, $P_{1024}$:]{the peak fluxes measured in
  bins of 64, 256 and 1024 ms, respectively.} 
\item[$F_{1}$, $F_{2}$, $F_{3}$, $F_{4}$:]{the four time-integrated
  fluences in the  20-50, 50-100, 100-300, and $>$ 300 keV spectral channels,
  respectively.} 
\end{description}
Apart from these nine parameters three more composite parameters  are
of interest to researchers~\citep{mukherjeeetal98}. These are: 
\begin{description}
  \item [$F_t$=$F_1+F_2+F_3+F_4$:]{the total fluence of a
GRB.}
  \item [$H_{32}= F_3/F_2$:]{measure of spectral hardness using the
    ratio of $F_2$ and $F_3$.}
  \item [$H_{321}= F_3/(F_1 + F_2)$:]{measure of spectral hardness
    based on the ratio of channel fluences $F_1,F_2,F_3$.}
\end{description}
The current (BATSE 4Br) catalogue  contains revised
locations of 208 bursts from the BATSE 4B Catalog along with 515
bursts observed between September 20 1994 and August 29 1996 apart
from the bursts present in the BATSE 3B Catalog. Many parameters,
largely  in the four time-integrated fluences $F_1$, $F_2$, $F_3$
and especially  $F_4$, have zeroes recorded that can be regarded as
missing \citep{chattopadhyayandmaitra17}. Consequently, the 
derived variables are also missing for these GRBs. So
the BATSE 4Br catalog has 1599 GRBs (among 1973 GRBs)
containing complete information on all the nine original (plus
three derived) variables. Most authors have used a very subset of
these 12 variables for their analysis. This led us to wonder whether
the nine original variables contains relevant information that might
improve the quality of the clustering to give more coherent groups. We
thus analyzed 1599 GRBs from the BATSE 4Br catalogue having complete
information on the nine original variables with MBC using mixtures of
{\it t}-densities. Further, we wondered if the five groups
identified  by \citet{chattopadhyayandmaitra17} were only
because the groups identified were constrained to be Gaussian and so had thinner tails,
while there really were fewer groups with thicker tails, that is a
scenario reminiscent of the situation in
~\ref{sim.gauss.t}. Therefore, we reanalyzed the BATSE data using
all nine parameters and $t$MMBC.

We first briefly discuss the univariate and bivariate relationships
between the nine original parameters in the BATSE 4Br catalogue. 
\begin{figure*}
\includegraphics[width=\textwidth]{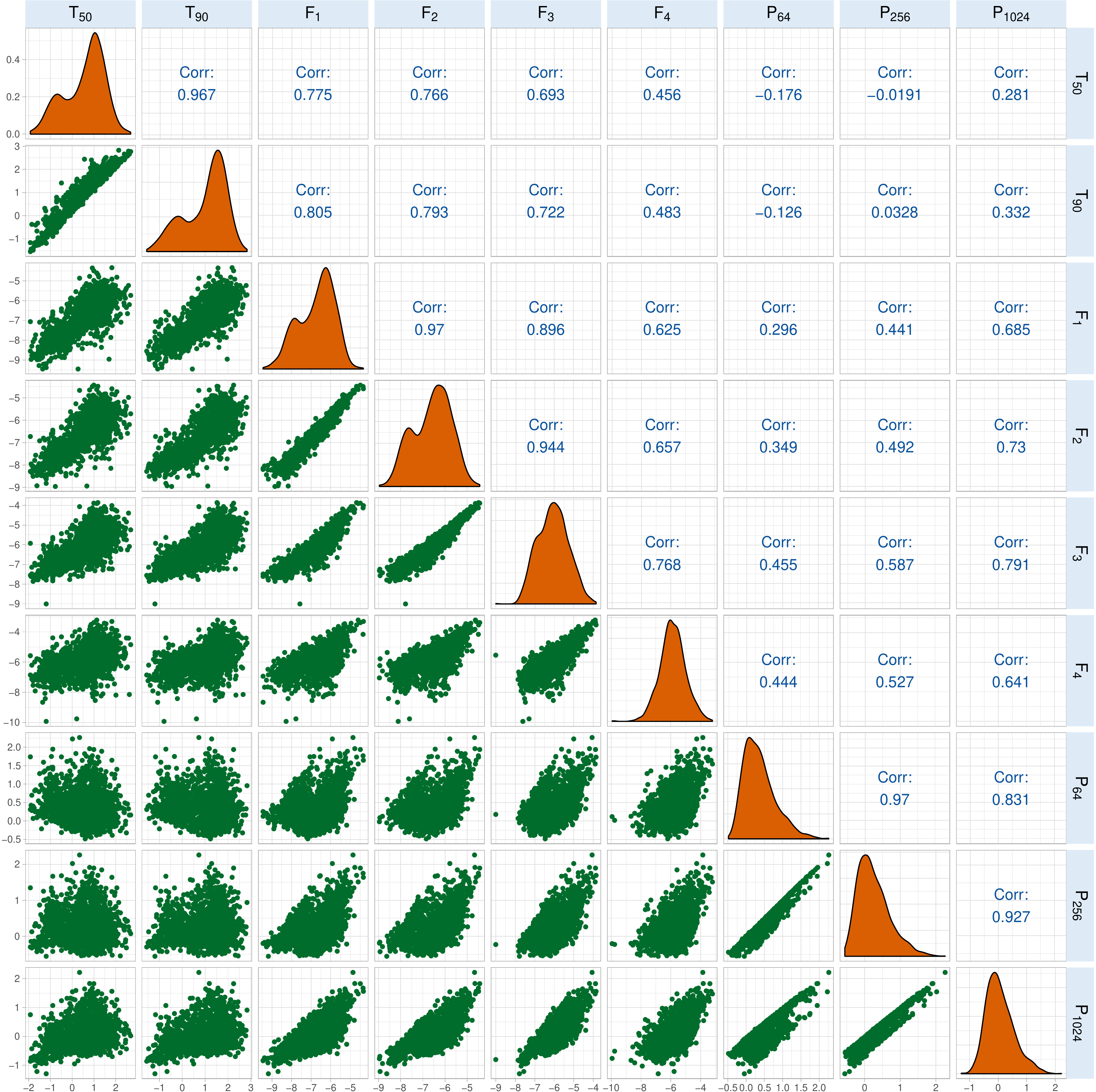}
\caption{A matrix of scatterplots (the lower triangle),
  density plots ( the diagonal) and correlation coefficients(the upper
  triangle) of the nine parameters $T_{50}$, $T_{90}$, $P_{64}$
  $P_{256}$, $P_{1024}$, $F_1$, $F_2$, $F_3$ and $F_4$ using 1599 GRBs of the BATSE
  4Br catalogue. All displays are  in the logarithmic scale.}
\label{fig:Bivariate9}
\end{figure*}
Fig.~\ref{fig:Bivariate9} displays the bivariate relationships
between the nine original variables of the BATSE 4Br catalogue along
with the univariate density plots of the nine parameters. The two
duration variables $\log_{10}T_{50}$ and $\log_{10}T_{90}$ exhibit a
very high positive association amongst themselves. Similar behavior is
exhibited by the three peak fluxes $\log_{10}P_{64}$,
$\log_{10}P_{256}$ and $\log_{10}P_{1024}$. Fluence $\log_{10}F_1$
shows a very high positive association with fluences $\log_{10}F_2$
and $\log_{10}F_3$ and a high positive association with
$\log_{10}F_4$. Fluence $\log_{10}F_2$ exhibits a very high positive
association with $\log_{10}F_3$ and a high positive association with
$\log_{10}F_4$ that also has a high positive association with
$\log_{10}F_3$. Duration $\log_{10}T_{50}$ exhibits a high positive
association with fluences $\log_{10}F_1$, $\log_{10}F_2$ and
$\log_{10}F_3$ and a moderate positive association with
$\log_{10}F_4$. $\log_{10}T_{90}$ behaves similar to $\log_{10}T_{50}$
except that it shows a very high positive association with
$\log_{10}F_1$. Fig.~\ref{sim-norm-var-select} has also pointed out
through the scatterplots the limitations that are posed on grouping
using only one or two variables, thus pointing out the importance of
using more than two variables for clustering. We now perform
cluster analysis on the $1599$ GRBs using the nine original parameters
(in logarithmic scale), that is $\log_{10}T_{50}$, $\log_{10}T_{90}$,
$\log_{10}F_1$, $\log_{10}F_2$, $\log_{10}F_3$, $\log_{10}F_4$,
$\log_{10}P_{64}$, $\log_{10}P_{256}$ and $\log_{10}P_{1024}$.  

\subsection{Clustering  GRBs Using All Observed Parameters}
\label{MBC:all}
We first perform $t$MMBC using 1599 GRBs from the BATSE 4Br catalogue
and then classify the GRBs with incomplete information to the groups
obtained using the $t$MMBC. 
\subsubsection{$t$MMBC with all nine parameters}
\label{t:MBC}
We check for redundancy among
the nine parameters $\log_{10}T_{50}$, $\log_{10}T_{90}$, $\log_{10}P_{64}$,
$\log_{10}P_{256}$, $\log_{10}P_{1024}$, $\log_{10}F_1$,
$\log_{10}F_2$, $\log_{10}F_3$, $\log_{10}F_4$ using model-based
variable selection. The results obtained (Table  \ref{tab:var:select}) do not show redundancy among these nine parameters. However, there is
redundancy beyond these nine variables, because the derived variables
are linearly related to the nine parameters.  We thus performed $t$MMBC
on the nine original variables using the {\tt TEIGEN}
package in R and determined $K$ from amongst $\{1, 2, \ldots,9\}$ using
BIC -- indeed, 
\begin{table}
\caption{Results of forward and backward-variable selection for determining redundancy among  $\log_{10}T_{90}$, $\log_{10}T_{50}$,
  $\log_{10}P_{64}$, $\log_{10}P_{1024}$, $\log_{10}P_{256}$,
  $\log_{10}F_{3}$, $\log_{10}F_{2}$, $\log_{10}F_{1}$,
  $\log_{10}F_{4}$ for MBC.} 
\label{tab:var:select}
\centering
\begin{tabular}{rlrcr}
\hline\hline
Step & Variable  & Step Type & BIC Difference & Decision\\
\hline
1 & $\log_{10}T_{90}$ & Add & 452.95 & Accepted \\
2 & $\log_{10}T_{50}$ & Add & 395.74 & Accepted \\
3 & $\log_{10}P_{64}$ & Add & 181.35 & Accepted \\
4 & $\log_{10}P_{64}$ & Remove & 181.73 & Rejected \\
5 & $\log_{10}P_{1024}$ & Add & 1636.98 & Accepted \\
6 & $\log_{10}T_{90}$ & Remove & 391.06 & Rejected \\
7 & $\log_{10}P_{256}$ & Add & 1266.77 & Accepted \\
8 & $\log_{10}T_{90}$ & Remove & 235.84 & Rejected \\
9 & $\log_{10}F_{3}$ & Add & 540.03 & Accepted \\
10 & $\log_{10}T_{90}$ & Remove & 243.47 & Rejected \\
11 & $\log_{10}F_{2}$ & Add & 509.78 & Accepted \\
12 & $\log_{10}T_{90}$ & Remove & 95.59 & Rejected \\
13 & $\log_{10}F_{1}$ & Add & 312.38 & Accepted \\
14 & $\log_{10}T_{90}$ & Remove & 45.00 & Rejected \\
15 & $\log_{10}F_{4}$ & Add & 113.55 & Accepted \\
16 & $\log_{10}F_{4}$ & Remove & 16.09 & Rejected \\
\hline
\end{tabular}
\end{table}
Fig.~\ref{fig:BIC} indicates overwhelming evidence in favor of a
five-component $t$MM, with a difference of greater than 10 than for other
$K$, which as per ~\citet{kassandraftery95}, constitutes very strong
evidence.  The results obtained mirror those of
\citet{chattopadhyayandmaitra17} which also found five groups upon
using GMMBC and six parameters.
\begin{figure}
  \centering 
  \mbox{
        \hspace{-0.045\textwidth}
        \subfloat[]{\label{table:BIC}
        \begin{tabular}{cc}
$K$ & BIC\\
\hline
          1 & -11120.1\\  
          2 & -7625.3\\ 
          3 & -6885.8\\  
          4 & -6705.8\\  
          5 & -6532.6\\
          6 & -6588.5\\  
          7 & -6707.0\\  
          8 & -6693.9\\
          9 & -6856.2\\
          \hline
        \end{tabular}
      }
      \centering
      \subfloat[]{\label{fig:bic}
      \raisebox{-.5\height}
      {\includegraphics[width=0.35\textwidth]{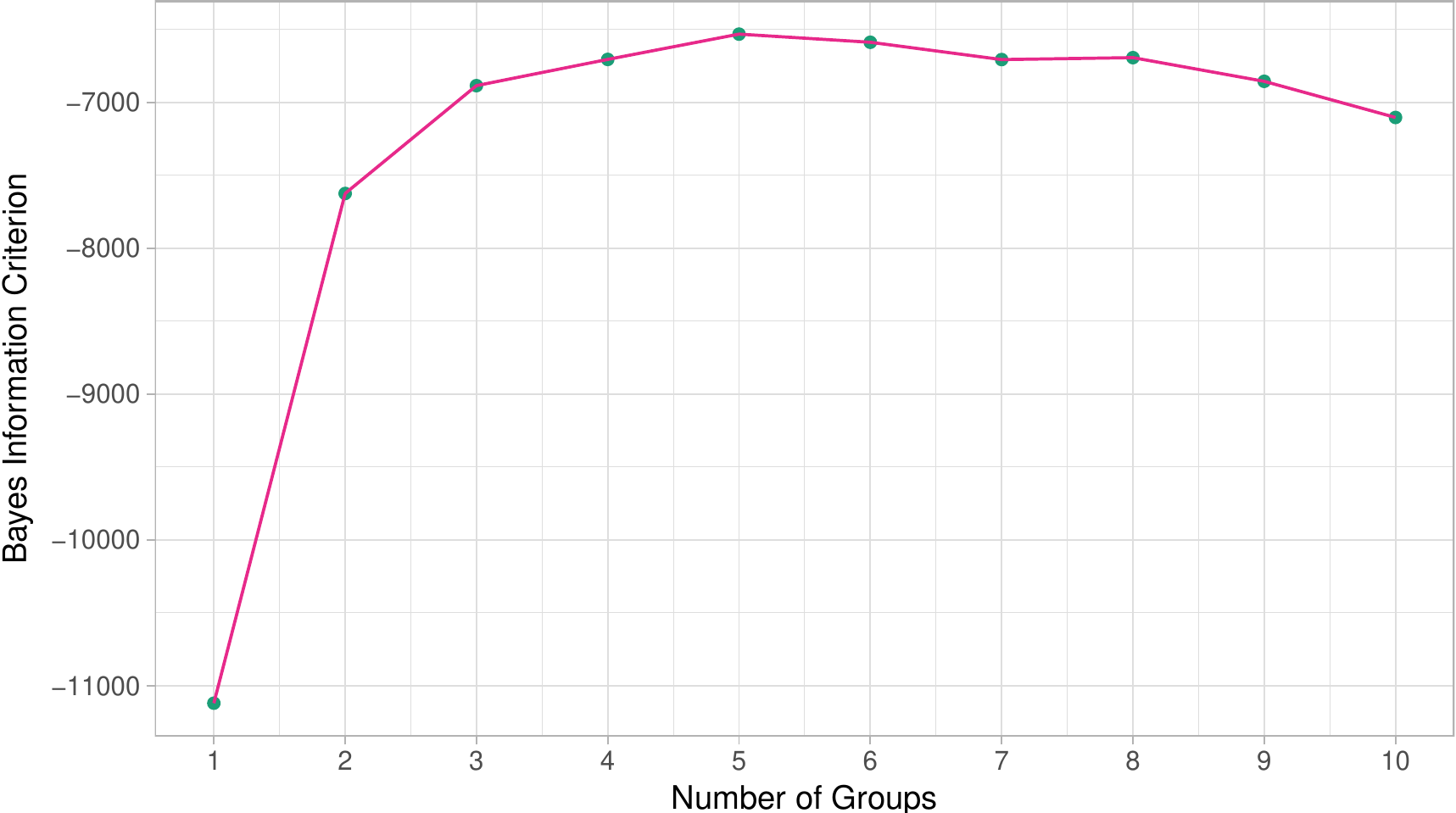}} }
        \centering
      }
      \caption{BIC for each $K$ upon performing $t$MMBC of the 1599
        GRBs in the BATSE 4Br catalogue.}  
      \label{fig:BIC}
    \end{figure}

A reviewer asked why we did not simply eliminate variables that did
not have much apparent additional  information because they were highly
correlated with other variables. We refer back to the example of
Fig.~\ref{sim-norm-var-select} in
Section~\ref{ov:variable selection} and note that there also we had
very high correlations 
(0.98) between two variables, but both were needed to be included with
the third variable for good clustering performance. 
The fact that the correlation between
any pair is high does not necessarily mean that one of the
variables in the pair is redundant for clustering and can simply be
dropped. Indeed, it is possible that there is more redundancy of the
variables with regard to defining some group and not in the case of
others. We return to this point again in Section~\ref{sec:analysis},
but note our preference for using the data to systematically inform us
of relevant and irrelevant variables for clustering. Our formal
variable selection algorithm establishes the relevance of all nine
parameters. Also as mentioned in
Section~\ref{sec:tmmbc}, the {\tt TEIGEN} (and {\tt MCLUST}) family
allows for restricted  dispersion matrices with the
choice governed by BIC that penalizes more complicated (i.e. less
restricted) models. 
\paragraph{Validity of obtained groupings}
 We calculate the empirical pairwise overlap by fitting a GMM as
described in the the {\sc MixSim} package \citep{melnykovetal12}.
The overlap map of Fig. \ref{overlap:teigen} shows the distinctness of the five groups obtained using $t$MMBC.
It is evident that the Group 4 have  very small overlap with both
Groups 2 and 3. On the other hand, Groups 1 and 4
have the highest 
overlap, while the pairwise overlap measures between  Groups  1, 2 and  
3 are moderate. The overlap map indicates that the clusters obtained
are quite well-separated and so our results find five  
GRB sub-populations that are more distinct than \citet{chattopadhyayandmaitra17}. In order to provide a clearer understanding of the
figures in the overlap map we provide a visual representation of the
three pairs of groups having approximate overlaps of 0, 0.1 and 0.05,
using an Andrews plot~\citep{andrews72} in Fig. \ref{andrews-overlap} that
provides an
\begin{figure}
\centering
{\includegraphics[width=0.50\textwidth]{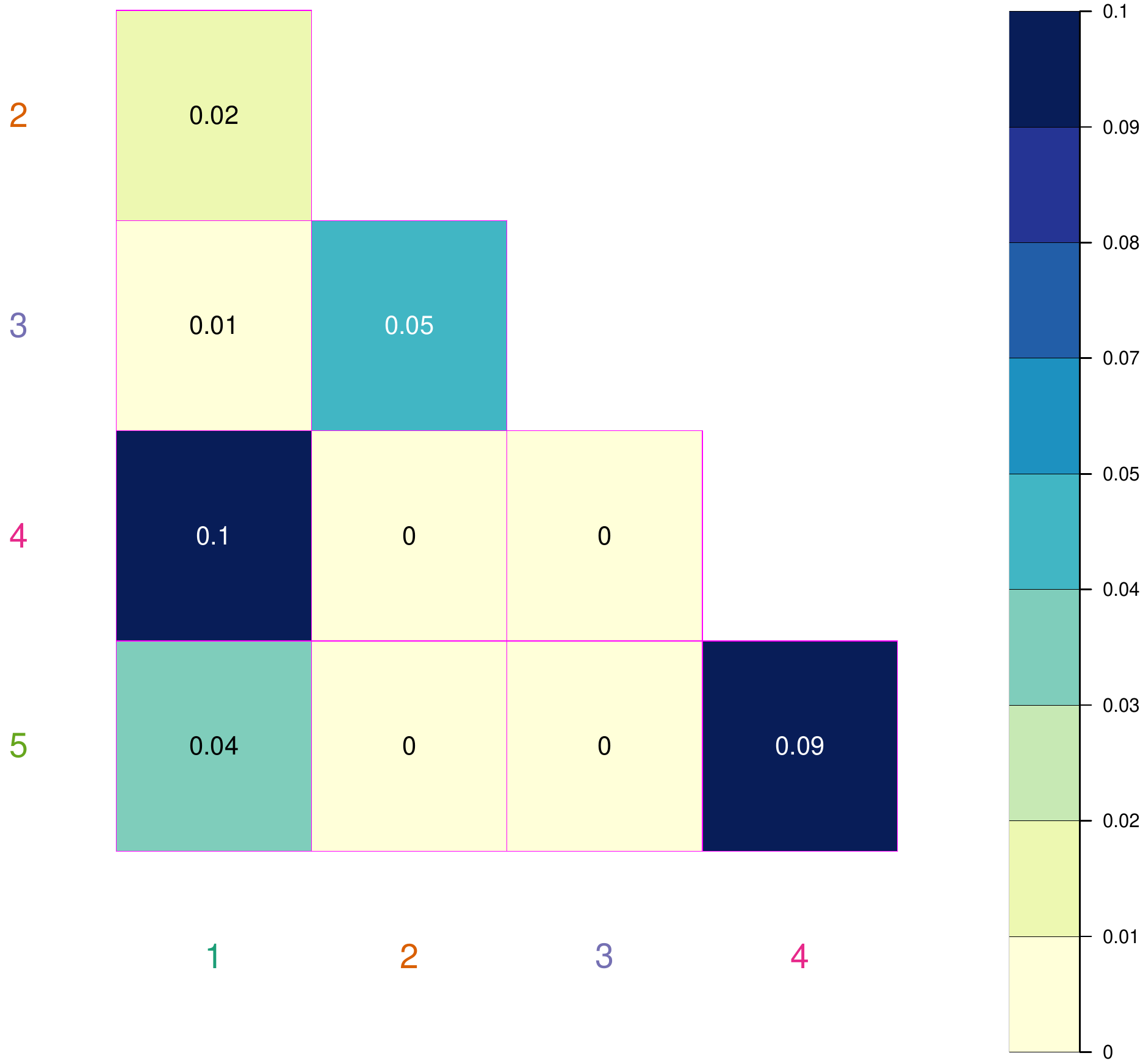}} 
\caption{Pairwise overlap measures between the $k$th and the $l$th  groups obtained by our five-component $t$MMBC solutions.}
\label{overlap:teigen}
\end{figure}
effective way to visualize multivariate data. In Andrews plot, a
realization $x = (x_1, x_2,\ldots, x_p)^T$ 
is represented by a curve $f_x(t)$ in argument $t$ $(-\pi \leq t \leq \pi)$ where $f_x(t)$ defines a finite Fourier series
\begin{equation}
\label{andrews:fourier}
f_x(t) = \frac{x_1}{\sqrt{2}} + x_2 \sin t + x_3 \cos t + x_4\sin 2t
+x_5\cos 2t+ \ldots
\end{equation}
Thus each observation is represented as a curve in $(-\pi, \pi)$. \cite[For a detailed review of Andrews curves,
see][]{khattreeanddayanand02}. In Fig. \ref{andrews:0} the
two groups represented by the curves of different colours are much
more distinct than those in figures \ref{andrews:0.05} and
\ref{andrews:0.1}. Indeed, in none of the cases do the curves of
different colours track together: there is separation at some point or
the other on every curve. Similar comparison curves obtained
from the other pairs of groups show that 
our five groups are well-separated compared to the groups obtained by
\citet{chattopadhyayandmaitra17}. 
\begin{figure*}
\mbox{ 
\subfloat[Groups 2 and 4 GRBs]{\label{andrews:0}\includegraphics[width=0.33\textwidth]{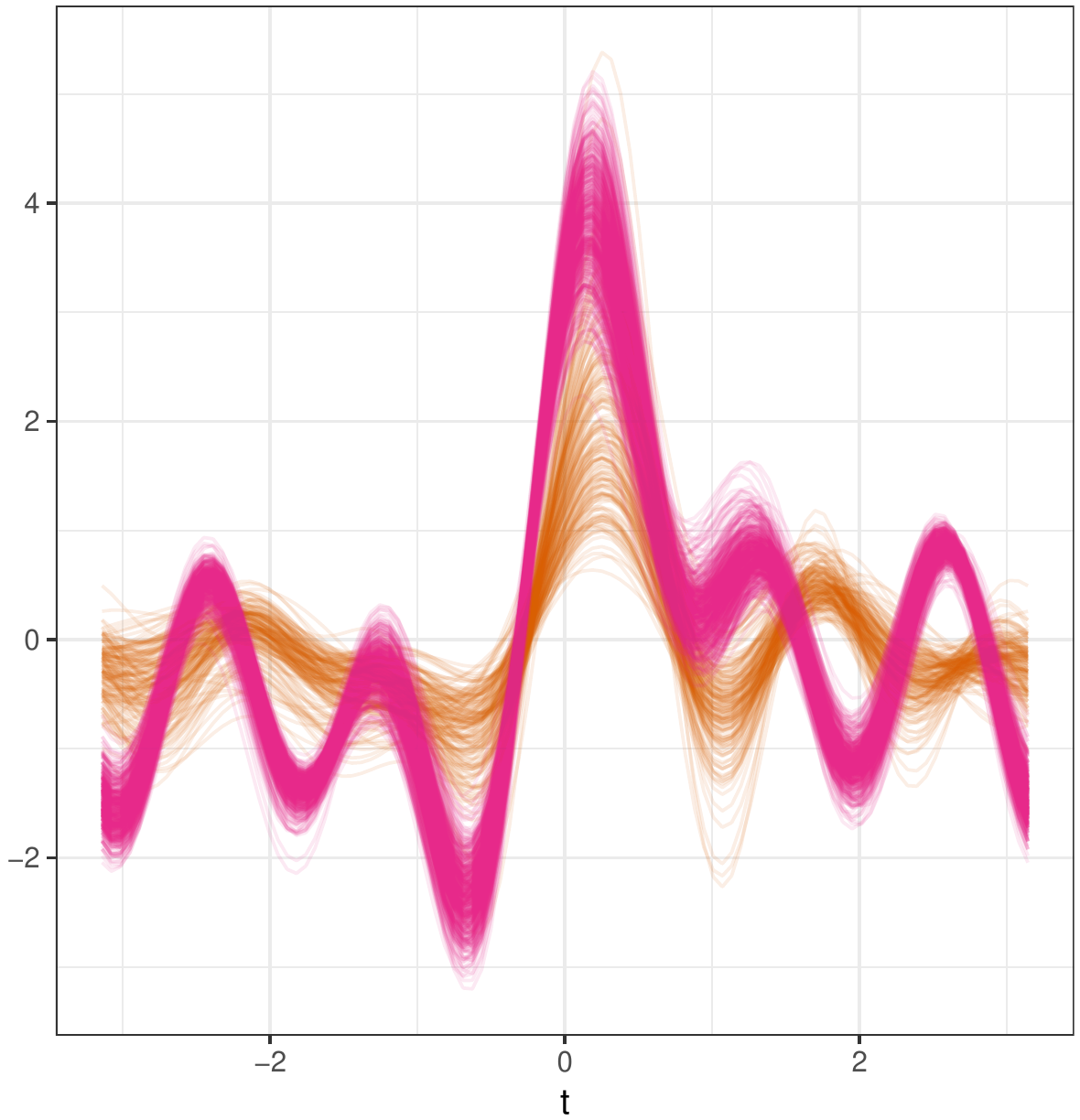}}
\subfloat[Groups 2 and 3 GRBs]{\label{andrews:0.05}\includegraphics[width=0.33\textwidth]{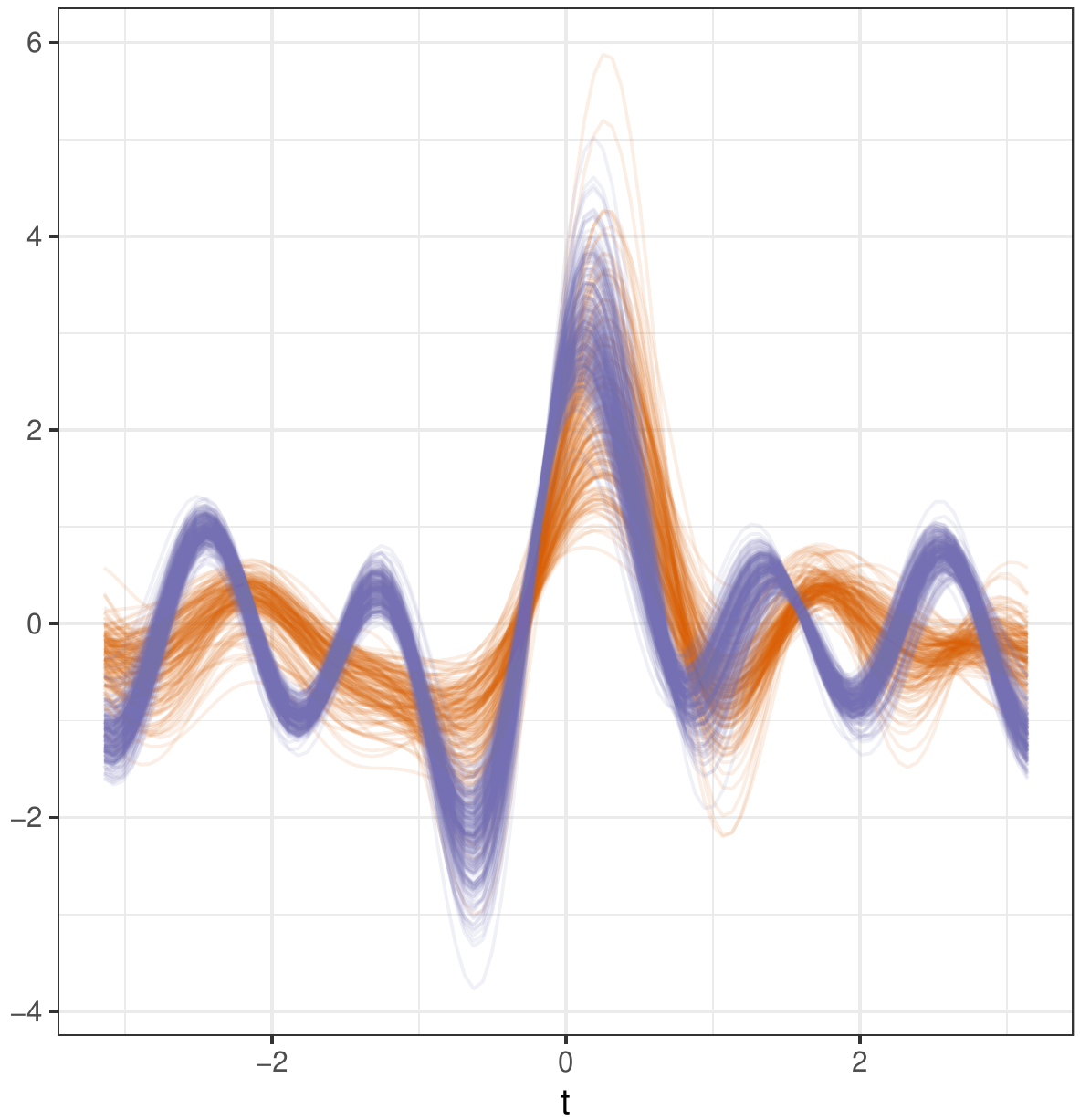}}
\subfloat[Groups 1 and 4 GRBs]{\label{andrews:0.1}\includegraphics[width=0.33\textwidth]{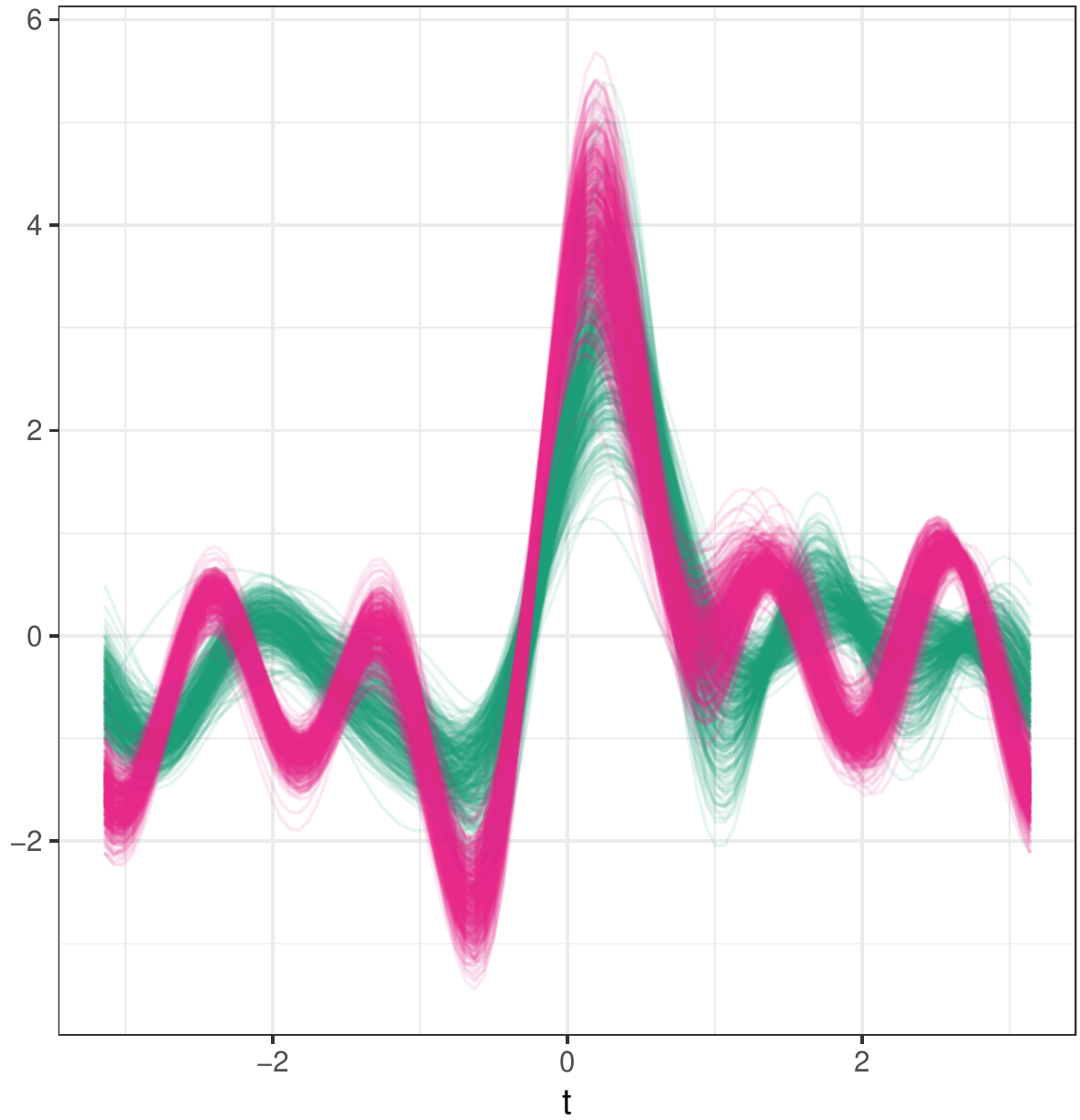}}}
\caption{Andrews plot of observations in (a) Groups 2 and 4 having
  negligible pairwise overlap, (b) Groups 2 and 3 having overlap of
  0.05 and (c) Groups 1 and 4 having overlap of 0.1}
  \label{andrews-overlap}
\end{figure*}
Also, the generalized overlap (see
\citet{chattopadhyayandmaitra17}) for the five component solutions is
0.05 which is much less than the 0.10 obtained for five-component GMMBC solution
in \citet{chattopadhyayandmaitra17} that adds support
to the fact the groups obtained here are more distinct than than
those obtained by \citet{chattopadhyayandmaitra17}. 

A reviewer wondered whether our five-cluster results were by chance
and whether accounting for variability in the estimated groupings and
parameters would yield different, perhaps more conservative,
results. We note that BIC chose five clusters and did so, as per
\citet{kassandraftery95} decisively, with a difference of over 10 than
all other models and components under consideration. To further
investigate the strength of this result, we also used a nonparametric bootstrap
technique to estimate the distribution of the number of kinds of GRBs
in the BATSE catalog. Specifically, we used 1000 bootstrap replicates
of the dataset, with each replicate  obtained by sampling with
replacement 1599 records from the complete dataset. For each
replicate, we fit $t$MMs in the same manner as in Section~\ref{MBC:all} and
used BIC to select the order of the model from among $K=2,3,4,5$. All 1000 bootstrap
replicates chose the five-component cluster model, providing
confidence in our findings that there are five ellipsoidal
sub-populations of GRBs in the BATSE 4Br catalogue.

A second reviewer asked us to clarify that our nonparametric bootstrap
procedure as implemented above only provides an estimate of the number of
clusters that is selected using BIC. It, of course, does not provide a
$p$-value of the hypothesis that $K=5$ is the smallest number of
clusters compatible with the data. To address this latter question,
we could implement a parametric bootstrap procedure whereby
the bootstrap samples are drawn from the $K_*$-component $t$-mixture
distribution fitted to the original data. The $p$-value of the test of
$H_0: K=K_*$ versus the alternative hypothesis $H_a: K=K^*$ could be 
approximated calculated on the basis of the bootstrap replications of
the likelihood ratio test statistic formed for each bootstrap sample
after the fitting of $K_*$ and $K^*$ $t$-component mixture densities
densities to it. (For testing the competing hypothesis of two against
five groups, for instance, we would have $K_*=2$ and $K^*=5$ in the
above specification. For testing that $K=5$ is the smallest number of
groups compatible with the data, we would set $K_*=5$ and
$K^*= K_1>5$.) Indeed, this is a very
computationally intensive procedure requiring multiple
initializations and fittings for each of the pairs of models posited
in the two competing hyptheses.

\paragraph{Analysis of Results}\label{sec:analysis}
 
   \begin{figure}
   \includegraphics[width=0.5\textwidth]{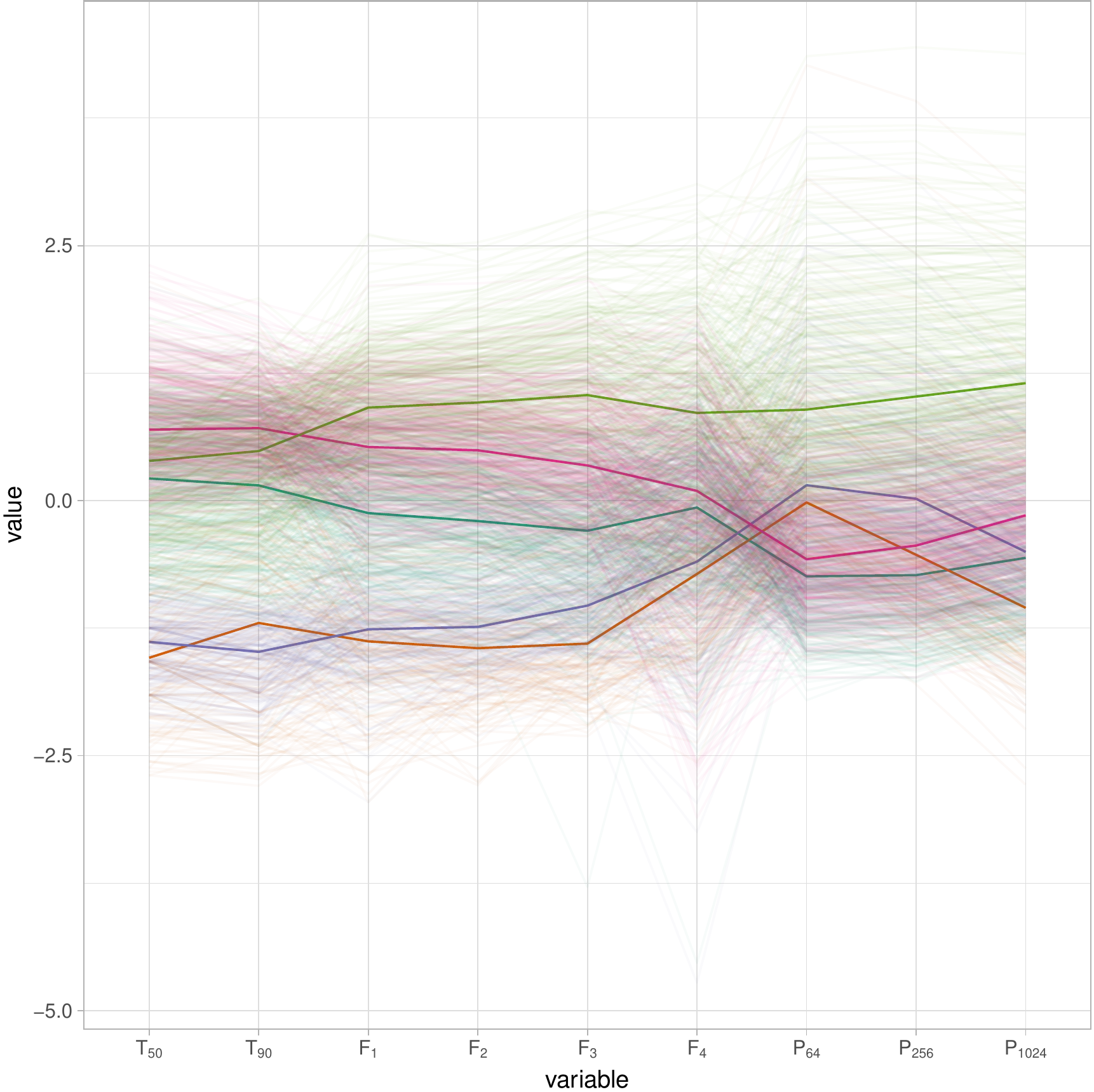}
   \caption{Parallel coordinate plot of the 1599 BATSE 4Br GRBs colored as per
     their group indicators. The  solid lines represent the group medians for each of  the nine variables displayed. Variables are in the logarithmic scale.}
\label{pcp.teigen.plot}
\end{figure}

\begin{table*}
  \caption{Number of GRBs in each of the five groups obtained using $t$MMBC.}     
  \label{tab:nks:teigen}
  \centering
  \begin{tabular}{rrrrrr} \hline\hline
    Group & {\color{Gr1} 1} & {\color{Gr2} 2} &  {\color{Gr3} 3} & {\color{Gr4} 4} &  {\color{Gr5} 5} \\ \hline
    Number of observations & 360 & 160 & 237 & 479 & 363 \\ \hline
  \end{tabular}
\end{table*}
\begin{table*}
  \caption{Mean (top row) and median (bottom row) parameter values for each of
    the five $t$MMBC groups. The figures in parenthesis are (top row)
    the standard error of the mean and (bottom row) the group inter-quartile
    range of the observations in that parameter.}
  \label{tab:means:teigen}
  \begin{tabular}{rrrrrrrrrr}           \hline\hline
        $k$ &$\log T_{50}$&$\log T_{90}$&$\log F_{1}$&$\log F_{2}$&$\log  F_{3}$&$\log F_{4}$&$\log P_{64}$&$\log P_{256}$&$\log P_{1024}$\\ \hline\hline
    \color{Gr1} 1&0.72(0.03)&  1.10(0.03)& -6.87(0.03)& -6.76(0.03)& -6.30(0.03)& -5.97(0.04)& 0.11(0.02)& 0.00(0.02)& -0.16(0.02)\\
   & 0.76(0.92) & 1.15(0.95) & -6.85(0.82) & -6.76(0.63) & -6.29(0.67) & -5.94(0.87) & 0.04(0.44) & -0.07(0.46) & -0.20(0.46)\\        
           \color{Gr2} 2&-0.62(0.07)& -0.07(0.07)& -7.98(0.05)& -7.77(0.04)& -7.05(0.04)& -6.42(0.05)& 0.43(0.03)&  0.13(0.03)& -0.36(0.04)\\
   & -0.86(1.06) & -0.09(1.25) & -8.00(0.75) & -7.82(0.62) & -7.15(0.68) & -6.49(0.93) & 0.36(0.36) & 0.02(0.46) & -0.44(0.60)  \\      
           \color{Gr3} 3&-0.74(0.02)& -0.37(0.02)& -7.90(0.03)& -7.61(0.02)& -6.82(0.02)& -6.49(0.05)& 0.50(0.03)&  0.32(0.02)& -0.13(0.02)\\
   & -0.72(0.41) & -0.35(0.48) & -7.90(0.49) & -7.64(0.46) & -6.86(0.44) & -6.39(0.83) & 0.43(0.51) & 0.26(0.44) & -0.18(0.44)\\        
           \color{Gr4} 4&1.24(0.02)&  1.67(0.02)& -6.27(0.02)& -6.18(0.02)& -5.78(0.02)& -5.86(0.03)& 0.13(0.01)&  0.07(0.01)&  0.01(0.01)\\
    & 1.20(0.62) & 1.66(0.51) & -6.25(0.60) & -6.17(0.58) & -5.80(0.56) & -5.80(0.81) & 0.12(0.31) & 0.06(0.34) & 0(0.35)\\       
           \color{Gr5} 5&0.88(0.03)&  1.43(0.03)& -5.91(0.03)& -5.76(0.03)& -5.27(0.03)& -5.17(0.04)& 0.82(0.02)&  0.78(0.02)&  0.69(0.02)\\
    & 0.92(0.68) & 1.46(0.66) & -5.90(0.71) & -5.77(0.69) & -5.26(0.81) & -5.14(1.14) & 0.75(0.57) & 0.72(0.57) & 0.63(0.56)\\       
           \hline
  \end{tabular}
\end{table*}

Table \ref{tab:nks:teigen} provides the number of observations in each
group, with the color for the group indicators matching the  color of
the groups in all figures to provide for easy cross-referencing.
We see that Groups 1 and 4 contain the highest number of GRBs while
Group 2 contains the lowest 
number of GRBs. Table \ref{tab:means:teigen} also lists the estimated means of the
five groups. The standard errors of the estimated means for the five groups are also provided in the parenthesis corresponding to each estimate. A more detailed visual representation is provided by 
Fig. \ref{pcp.teigen.plot} which displays the five groups via a
parallel coordinate plot\citep{inselberg85,wegman90, chattopadhyayandmaitra17}. 

Most authors have used the duration variable $T_{90}$ to describe the
group properties of GRBs while some have used fluences $F_1 - F_4$
along with duration. \citet{mukherjeeetal98} used total fluence $F_t =
F_1 + F_2 + F_3 + F_4$ and hardness ratio $H_{321} = F_3/F_1 + F_2$
along with  $T_{90}$ to describe the properties of their groups, a scheme that
was also adopted by \citet{chattopadhyayandmaitra17}. We follow this scheme to describe our groups using the
three properties duration-total Fluence-Spectrum. Using this rule the
five groups of \citet{chattopadhyayandmaitra17} were
intermediate-faint-intermediate, long-intermediate-soft,
intermediate-intermediate-intermediate, short-faint-hard and
long-bright-intermediate. This same  paradigm classifies our five
$t$MMBC-obtained groups as long-intermediate-intermediate,
short-faint-intermediate, short-faint-soft, long-bright-hard and long-intermediate-hard.

For a further study of the five groups, we take a closer
look at the duration variable $\log_{10}T_{90}$. This variable also
facilitates comparison of our results to those obtained in
\citet{chattopadhyayandmaitra17} and other authors such as
\citet{mukherjeeetal98} who have used $\log_{10}T_{90}$ along with
other derived variables such as $\log_{10}F_t$. Our fourth and fifth
group contains the bursts of highest duration (around $47s$ and $27s$
respectively). Typically the bursts from these two groups and the
first group are designated as long-duration bursts ($T_{90} > 2s$)
following the popular classification scheme of classifying bursts with
duration less than $2 s$ as short duration bursts and bursts greater
than $2 s$ as long duration bursts~\citep{chattopadhyayetal07}. The
second and third groups  consist of bursts of shortest duration
(around $0.4s$ and $0.7s$ respectively) and will be classified as
short duration bursts ($T_{90} < 2 s$).  
The standard errors of the
estimated means are not large and show that the estimated means of the
groups are distinct. 

We also compared our $t$MMBC grouping with the GMMBC grouping
of \citet{chattopadhyayandmaitra17} by means of a cross classification table (Table \ref{tab:cross}).
\begin{table}
  \caption{Number of 1599 GRBs assigned to each of the groupings by
    $t$MMBC
using the nine original variables (Grouping I) and GMMBC of \citet{chattopadhyayandmaitra17} using three original variables ($\log_{10}T_{50}$, $\log_{10}T_{90}$, $\log_{10}P_{256}$) and three derived variables $\log_{10}F_t$, $\log_{10}H_{32}$, $\log_{10}H_{321}$.}
\label{tab:cross}
  \begin{tabular}{cc|ccccc|c}\hline\hline
    && \multicolumn{6}{c}{Grouping I (New groupings)} \\
    &&\color{Gr1} 1&\color{Gr2} 2&\color{Gr3} 3&\color{Gr4} 4&\color{Gr5} 5&Total\\ \hline
    \multirow{6}{*}{\begin{sideways}{Grouping II}\end{sideways}}
    &\color{Gr1} 1&86&  50&  14&  24&   0&174\\
    &\color{Gr2} 2&5&  57&  25&   2&  60&149\\
    &\color{Gr3} 3&45&  48& 198&   0&   1&292\\
    &\color{Gr4} 4&186&   5&   0& 319&  41&551\\
    &\color{Gr5} 5& 38&   0&   0& 134& 261&433\\ 
    \hline
    &Total&360& 160& 237& 479& 363 \\ 
  \end{tabular}
  \end{table}
The high values in the diagonal (with the exception for Group 2)
indicates that the grouping structure in both the cases agrees well
for both the analyses with the highest agreement being noted in the
Group 4. 

\begin{figure*}
\mbox{ 
\subfloat[Group 1]{\label{fig:cor1.teigen}\includegraphics[height=0.28\textheight]{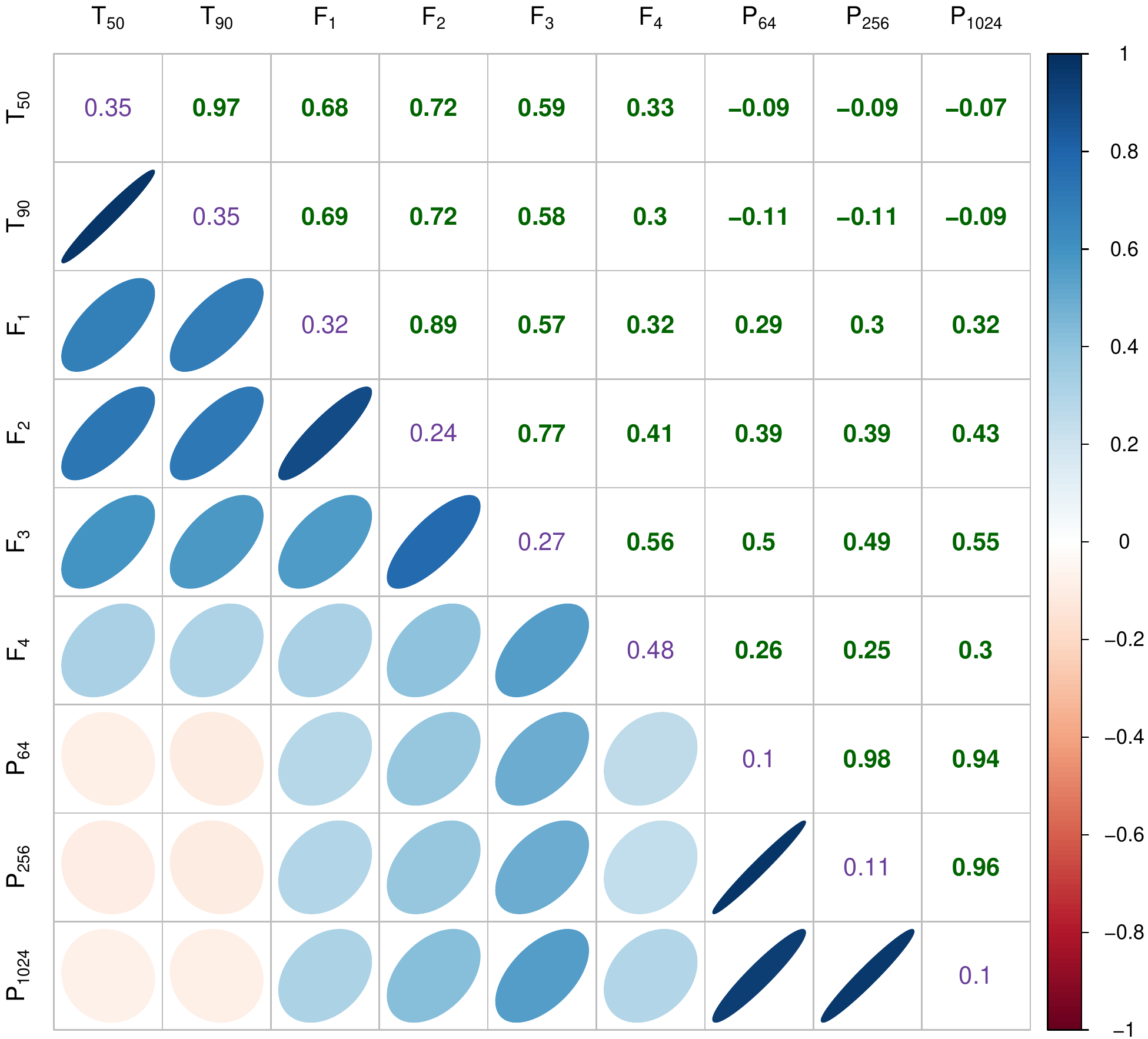}}
\subfloat[Group 2]{\label{fig:cor2.teigen}\includegraphics[height=0.28\textheight]{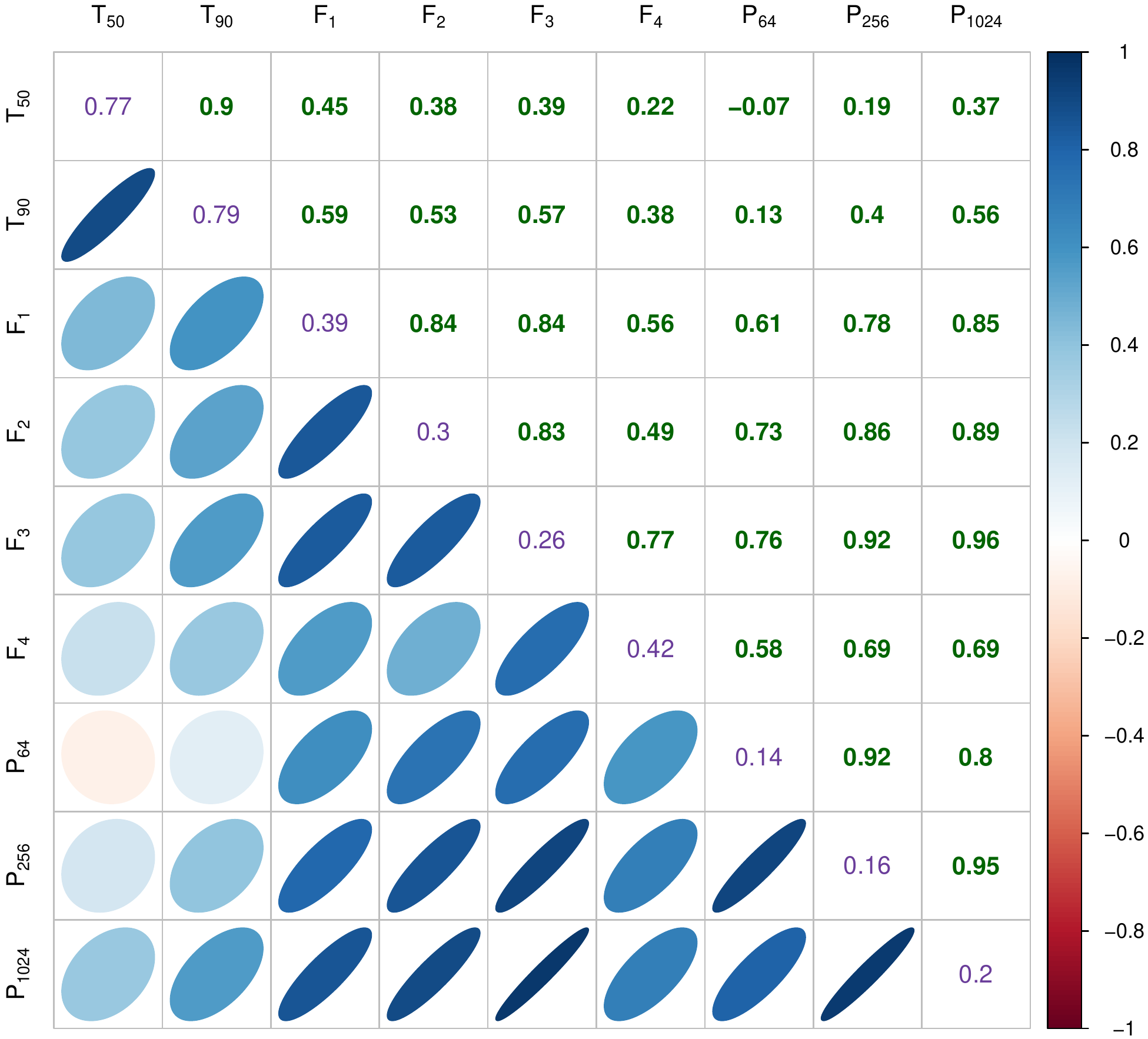}}}
\mbox{
\subfloat[Group 3]{\label{fig:cor3.teigen}\includegraphics[height=0.28\textheight]{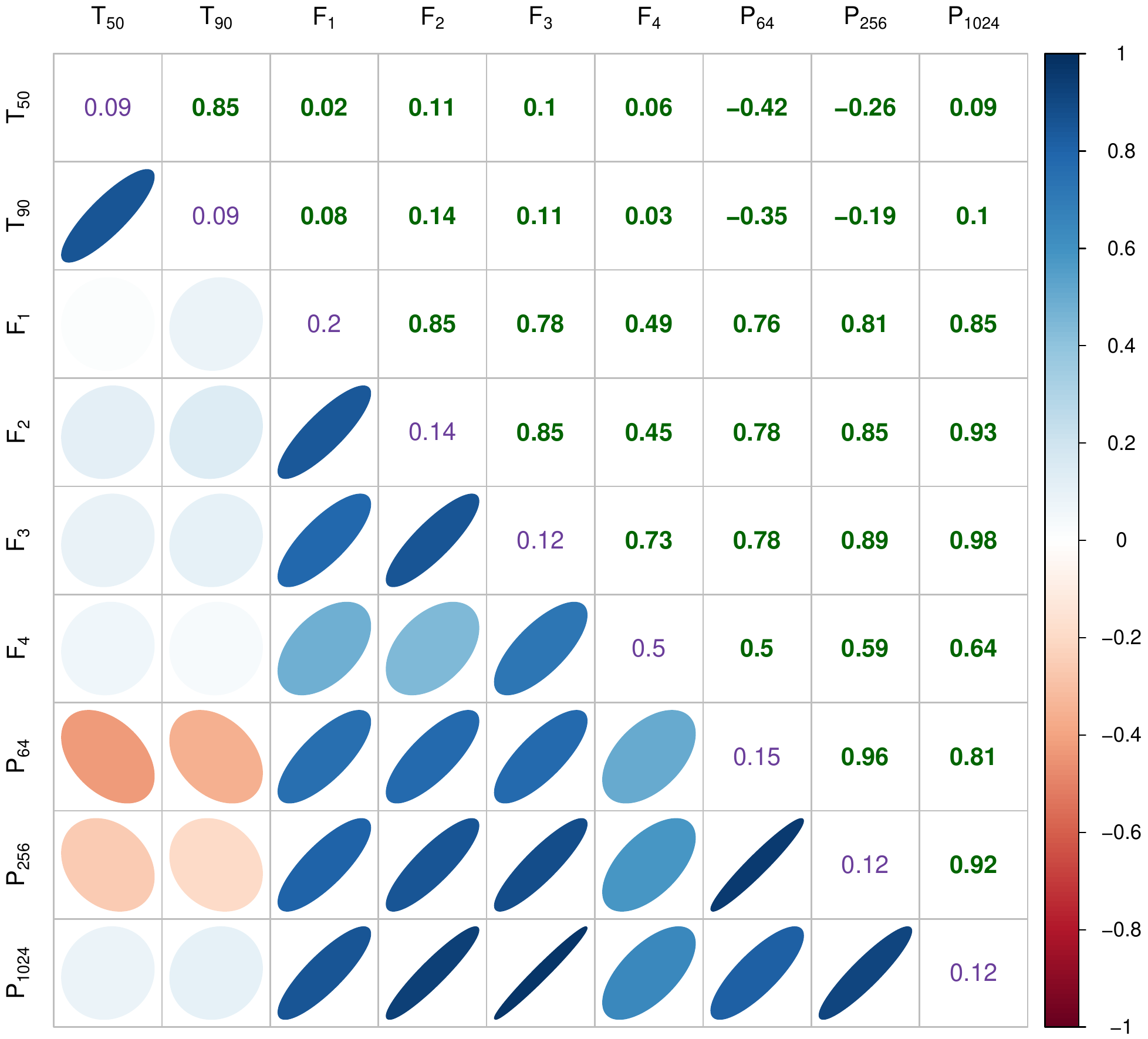}}
\subfloat[Group 4]{\label{fig:cor4.teigen}\includegraphics[height=0.28\textheight]{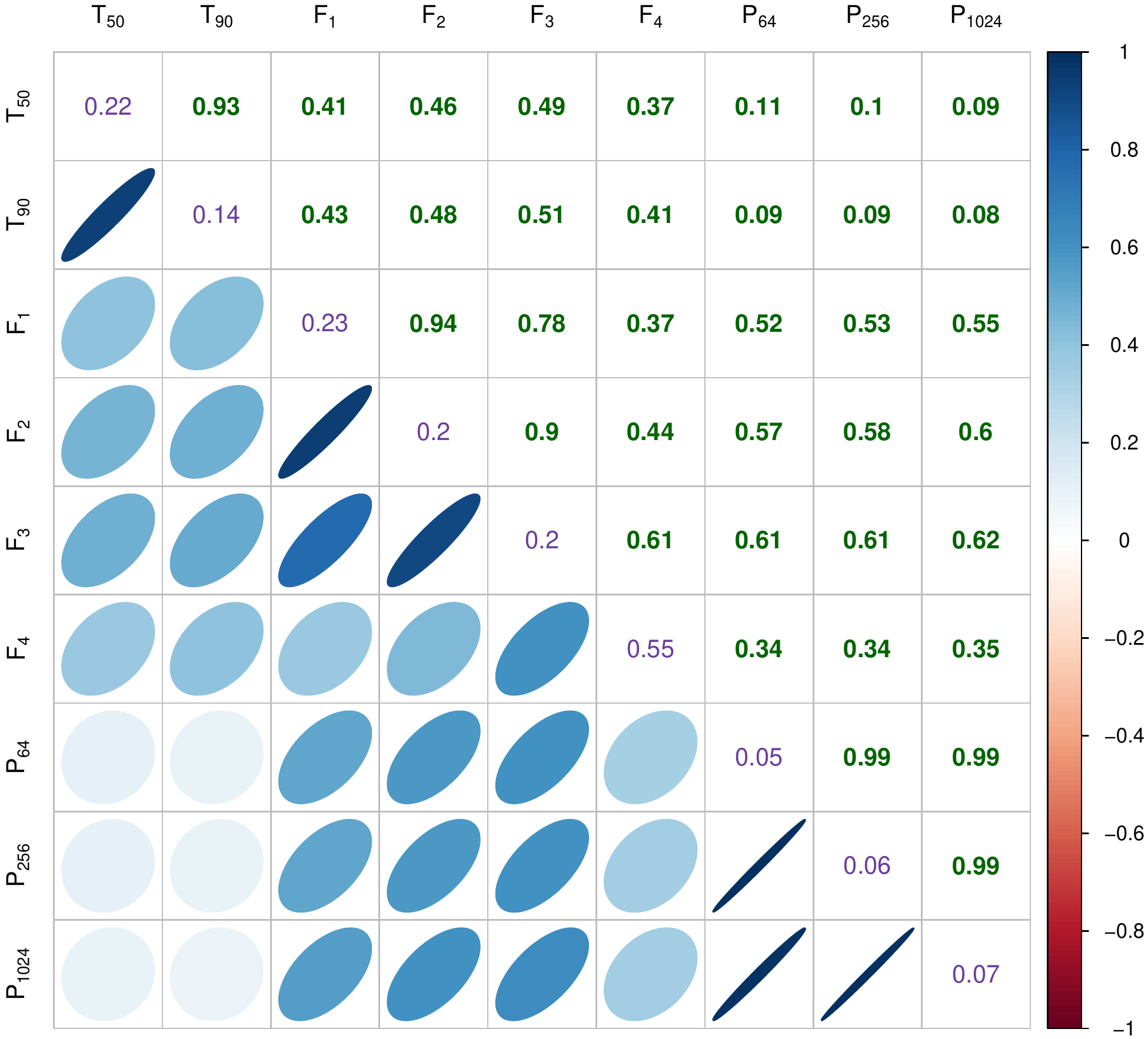}}}
\subfloat
    [Group 5]{\label{fig:cor5.teigen}\includegraphics[height=0.28\textheight]{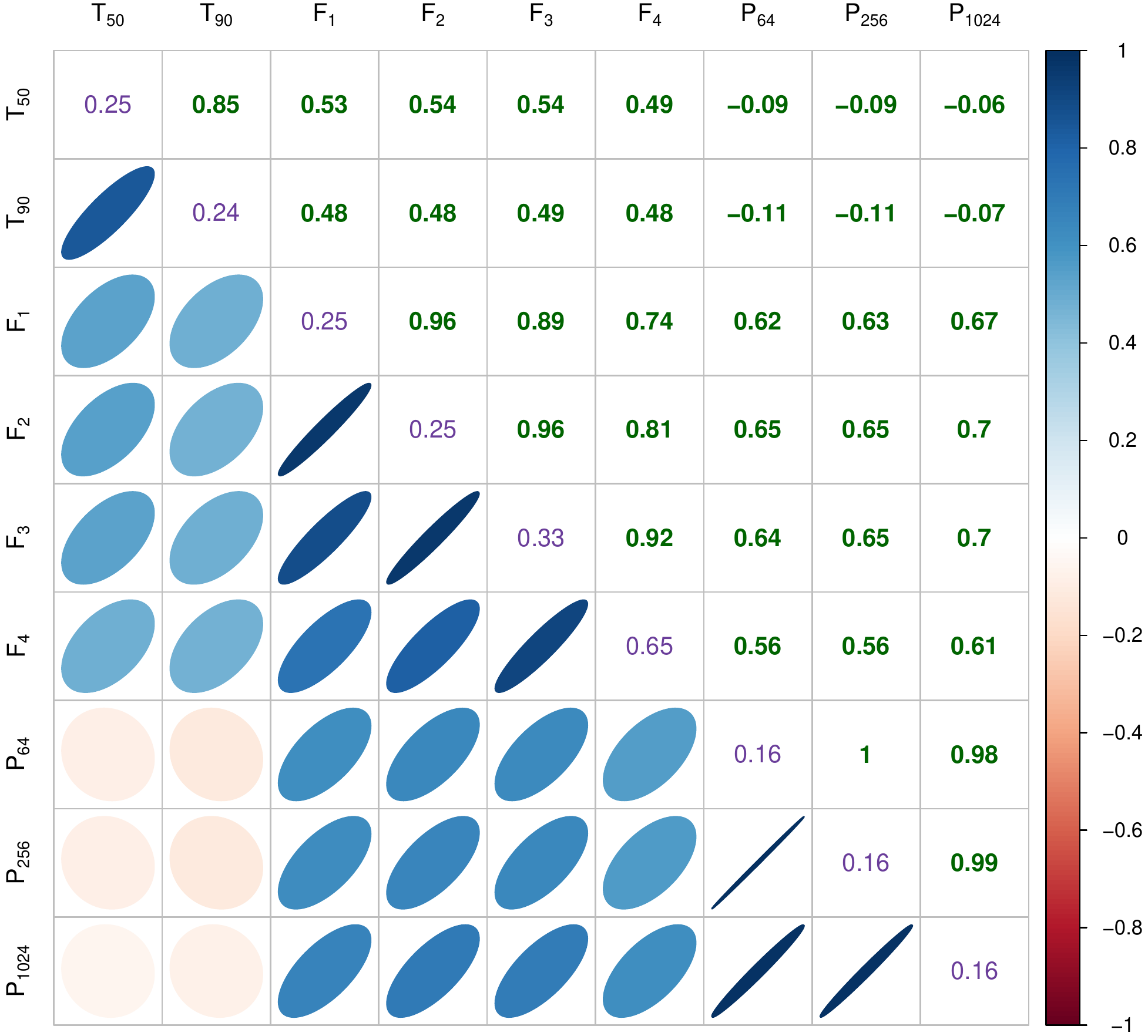}}
\caption{Variances and displays of the estimated correlations for each
  of the five   groups obtained from the five-component MBC solution using {\it t} mixtures for the 1599 GRBs. For each group, the off-diagonal elements
  display correlation between the variables while the diagonals
  display the variances. Both correlations and variances are
  calculated for the variables in the base-10 logarithmic scale.}
  \label{corrplot.6var.teigen}
\end{figure*}

In order to facilitate further study of the group structures we
calculated the correlations between the nine classes for each of the
five groups (Fig. \ref{corrplot.6var.teigen}). The diagonals of each
correlation plot display the estimated variance of the five
groups obtained by $t$MMBC. The upper triangular
portion of each correlation plot displays the correlation
between the variables, while the lower triangular part provides a
diagrammatic representation of the correlations in the upper
triangular part [e.g. the (2,1)th cell displays the correlation
between $\log_{10}T_{50}$ and $\log_{10}T_{90}$ diagrammatically, while
the (1,2)th cell displays the numerical value]. The duration variables
$\log_{10}T_{50}$ and $\log_{10}T_{90}$ display very high positive
association in all five groups. Duration $\log_{10}T_{50}$ and
fluence $\log_{10}F_1$ have a high positive association in Group 1 and moderate positive association in Groups 2, 4 and 5. In Group 3, they
have very low positive association. The other fluences
$\log_{10}F_2$, $\log_{10}F_3$ and $\log_{10}F_4$ also exhibit similar
linear relationships with $\log_{10}T_{50}$ for the five
groups except that $\log_{10}F_4$ shows a moderate positive association in Group 1. Duration $\log_{10}T_{90}$ has a high positive association
with $\log_{10}F_1$ in Group 1. In Groups 2, 4 and 5 they display a moderate positive association and a weak positive association in Group 3 . Also, $\log_{10}T_{90}$ displays a moderately positive
association with fluence $\log_{10}F_2$ in Groups 2, 4 and 5 and high positive
association in Group 1. In Group 3, they 
display a low positive association. Fluence $\log_{10}F_4$ exhibits a
moderate positive association with $\log_{10}T_{90}$ in Groups 4 and 5 and a weak positive association in Groups 1 and 2. They have a very
weak positive association in Group 3. Fluences $\log_{10}F_1$ and
$\log_{10}F_2$ show a very high positive association in all five
groups, $\log_{10}F_2$ and $\log_{10}F_3$ have a strong positive
association in Groups 1 and 2 while in Groups 3-5 they have a very strong
positive association. Fluence $\log_{10}F_4$ has a moderate positive
association with $\log_{10}F_1$ in Groups 2, 3 and 4 and a strong
positive association in Group 5. In Group 1 they exhibit a weak
positive association. Peak flux $\log_{10}P_{64}$ and
$\log_{10}T_{90}$ have a weak positive association in groups 2 and
4. In Groups 1, 3 and 5 they show a weak negative association. These
correlations are in agreement with the hypothesis that the total amount of fluence in higher duration
bursts is likely to more compared to that of shorter duration bursts.  

It is important to note that the variables that had very high
correlations in Fig.~\ref{fig:Bivariate9} have different correlations
in different groups. For instance, $\log T_{50}$ and $\log T_{90}$
have a correlation of 0.967 in the entire dataset, but ranges from
0.85 to 0.97 depending on the group. The correlation structures for
the different groups indicate that it may be possible to have a
lower-dimensional representation for some of them. Indeed, a factor
analysis \citep{johnsonandwichern88} of the 
observations in each of the groups indicated that four factors (but
not parameters) may adequately explain the relationship between the
parameters in Group 4, but not for the other groups. Therefore, it is appropriate to allow for general
dispersion structures for all the parameters in our $t$MMBC. 

Our analysis so far has been on 1599 GRBs for which observations are
available for all nine parameters. We now use the results
of our $t$MMBC on 1599 GRBs to classify the $374$ BATSE GRBs with
incomplete observations. 

\subsection{GRBs with partially observed parameters}
\label{class-374}
\subsubsection{Descriptive Analysis}
There are $374$ GRBs in the BATSE catalogue with incomplete
information in one or more parameters (mainly in the four fluences
$F_1 - F_4$), as seen in Table \ref{tab:nks.miss} that enumerates
the number of missing observations for in each of the nine parameters.
We present in Fig. \ref{fig:violin.plot} a split violinplot \citep{hintzeandnelson98}
with the left side of the violin displaying the distribution of the parameters
from the 1599 GRBs and the right side displaying the distribution
of the 374 GRBs that are missing observations in any other parameter. The violin plots for the two duration variables $\log_{10}T_{50}$ and
$\log_{10}T_{90}$ are very similar but the four peak fluxes $\log_{10}F_1
- \log_{10}F_4$ generally have lower values for the missing
cases. The
densities of the three peak fluxes $\log_{10}P_{64}$,
$\log_{10}P_{256}$ and $\log_{10}P_{1024}$ have heavier right tails for the
1599 GRBs compared to those of the 374 GRBs with incomplete
observation. The 374 GRBs with 
missing parameter  observations show a good degree of symmetry in all the three
peak fluxes. Consequently, and as expected, their values also are
generally lower for the observations with missing parameters than for
the observations with all parameters observed.

\subsubsection{Classification}
In Section~\ref{MBC:all} we 
excluded these GRBs from multivariate analysis since
standard MBC  techniques and available software are not suited to address situations with
missing variables. Here, we illustrate how we can use the clustering
results of` Section~\ref{MBC:all} along with classification
methods of Section~\ref{ov:classification} to group the GRBs with
missing parameters. We first develop some
methodology for this purpose.

\begin{table}
  \caption{Number ($n_j$) of observations with incomplete information in each of
    the BATSE 4Br catalogue parameters (denoted by $X_j$).}  \label{tab:data:missing}
    \begin{tabular}{cccccccccc}
\hline\hline
$X_j$&$T_{50}$&$T_{90}$&$P_{64}$&$P_{256}$&$P_{1024}$&$F_1$&$F_2$&$F_3$&$F_4$\\
\hline
$n_j$&0&0&1&1&1&29&12&6&339\\\hline\hline
\end{tabular}
\label{tab:nks.miss}
\end{table}

Corollary \ref{corr:t} implies that excluding the parameters  that are
missing for a GRB yields an observation of reduced dimensions that
still has a multivariate {\it t}-distribution with parameters 
corresponding to the observed parameters. Therefore the
classification rule of Section~\ref{ov:classification} can still be
used with reduced $t_\nu$ density as per Corollary \ref{corr:t} taking the place of the densities $f_i$'s in that section. Therefore, the
parameter estimates returned by $t$MMBC as per our ECM algorithm of
\ref{ov:teigen} can be used. Specifically, the estimated $\mu_k$s are
used, but only the parameters that are observed for the GRB under
consideration are included in the calculation of the classification
rule. Similarly, only the rows and columns of $\Sigma_k$s that correspond
to the observed parameters are included in the classification rule. The estimated prior
proportions $\pi_k$'s can be used unchanged in the classification rule
calculations.  
\begin{table*}
\caption{(a) Number of missing observations in each of the nine parameters for each of the five groups. (b) group means of the nine parameters for the classified
     GRBs. Note that this mean is computed for a parameter only for
     those GRBs that did not have missing observations in that
     parameter for that group.}
\label{tab:missing}
     \mbox{\subfloat[Number of missing observations in each of the parameters for each group]{\label{tab:nks:class}{
            \centering
            \begin{tabular}{rrrrrrrrrr}          \hline\hline
            	$k$ &$\log T_{50}$&$\log T_{90}$&$\log F_{1}$&$\log F_{2}$&$\log
            	F_{3}$&$\log F_{4}$&$\log P_{64}$&$\log P_{256}$&$\log P_{1024}$\\ \hline
            	\color{Gr1} 1 & 0 & 0 & 6 & 6 & 5 & 131 & 1 & 1 & 1 \\
            	\color{Gr2} 2 & 0 & 0 & 11 & 4 & 1 & 38 & 0 & 0 & 0\\ 
            	\color{Gr3} 3 & 0 & 0 & 10 & 1 & 0 & 22 & 0 & 0 & 0\\
            	\color{Gr4} 4 & 0 & 0 & 1 & 1 & 0 & 125 & 0 & 0 & 0\\
            	\color{Gr5} 5 & 0 & 0 & 1 & 0 & 0 & 23 & 0 & 0 & 0\\
            	\hline
            \end{tabular}
          }
     }
   }
  
\mbox{\subfloat[Mean (top row) and median (bottom row) parameter
  values for each group. The figures in parenthesis are (top row) the standard
  error of the mean and the inter-quartile range (bottom row). For
  each group, calculations are based on the GRBs that are not missing
  the particular parameter for that group. ``-'' indicates that there
  was only one observed field for that parameter in that group.]{\label{tab:means:class}{
      \hspace{-0.05\textwidth}
		\begin{tabular}{rrrrrrrrrr}          \hline\hline
           $k$ &$\log T_{50}$&$\log T_{90}$&$\log F_{1}$&$\log F_{2}$&$\log
           F_{3}$&$\log F_{4}$&$\log P_{64}$&$\log P_{256}$&$\log
                                                             P_{1024}$\\
                  \hline\hline
           \color{Gr1} 1 & 0.68(0.05) & 1.05(0.05) & -6.81(0.04) & -6.76(0.04) & -6.50(0.04) & -5.95(0.26)  & -0.04(0.02) & -0.16(0.02) & -0.30(0.02)\\
       & 0.67(0.87) &  1.05(0.88) & -6.82(0.55) & -6.75(0.56) &
                                                                -6.44(0.58) & -5.96(0.53) & -0.07(0.23) &  -0.21(0.25) & -0.34(0.232)\\ \hline
\color{Gr2} 2 & -0.76(0.11) & -0.37(0.11) & -7.92(0.07) & -7.95(0.05) & -7.34(0.04) & -6.62(0.16) &  0.19(0.03) & -0.11(0.02) & -0.61(0.03)\\
     & -0.89(0.88) & -0.55(1.06) & -8.03(0.59) & -7.91(0.44) &
                                                               -7.32(0.35) & -6.53(0.75) & 0.18(0.30) & -0.104(0.279) & -0.593(0.314)\\ \hline
\color{Gr3} 3& -0.74(0.05) & -0.35(0.05) & -8.10(0.08) & -7.84(0.04) & -7.17(0.04) & -7.09(0.32) & 0.18(0.04) & -0.01(0.04) & -0.44(0.04)\\
    & -0.74(0.30) & -0.35(0.48) & -8.07(0.29) & -7.83(0.28) &
                                                              -7.17(0.21) & -6.81(0.96) &  0.17(0.20) & 0.02(0.23) & -0.43(0.18)\\ \hline
\color{Gr4} 4& 1.23(0.04) &  1.63(0.03) & -6.34(0.04) & -6.29(0.04) & -6.02(0.03) & -8.07(1.49)  &  0.01(0.02) & -0.07(0.02) & -0.13(0.02)\\
   & 1.26(0.55) &  1.66(0.40) & -6.33(0.53) & -6.29(0.45) &
                                                            -6.01(0.51) & -8.07(1.49) &  -0.01(0.27) & -0.10(0.32) & -0.18(0.33)\\ \hline
\color{Gr5} 5& 0.89(0.11) &  1.53(0.11) & -6.32(0.09) & -6.18(0.10) & -5.91(0.12) & -3.95(-) &  0.36(0.07) &  0.31(0.07) & 0.18(0.07)\\
  & 0.85(0.80) &  1.60(0.80) & -6.33(0.51) & -6.18(0.58) & -6.03(0.55)
                  & -3.96(0) & 0.35(0.33) & 0.29(0.33) & 0.15(0.35)\\ 
           \hline\hline
         \end{tabular}
       }
     }
   }
      \end{table*}
\begin{figure}
\includegraphics[width=0.5\textwidth]{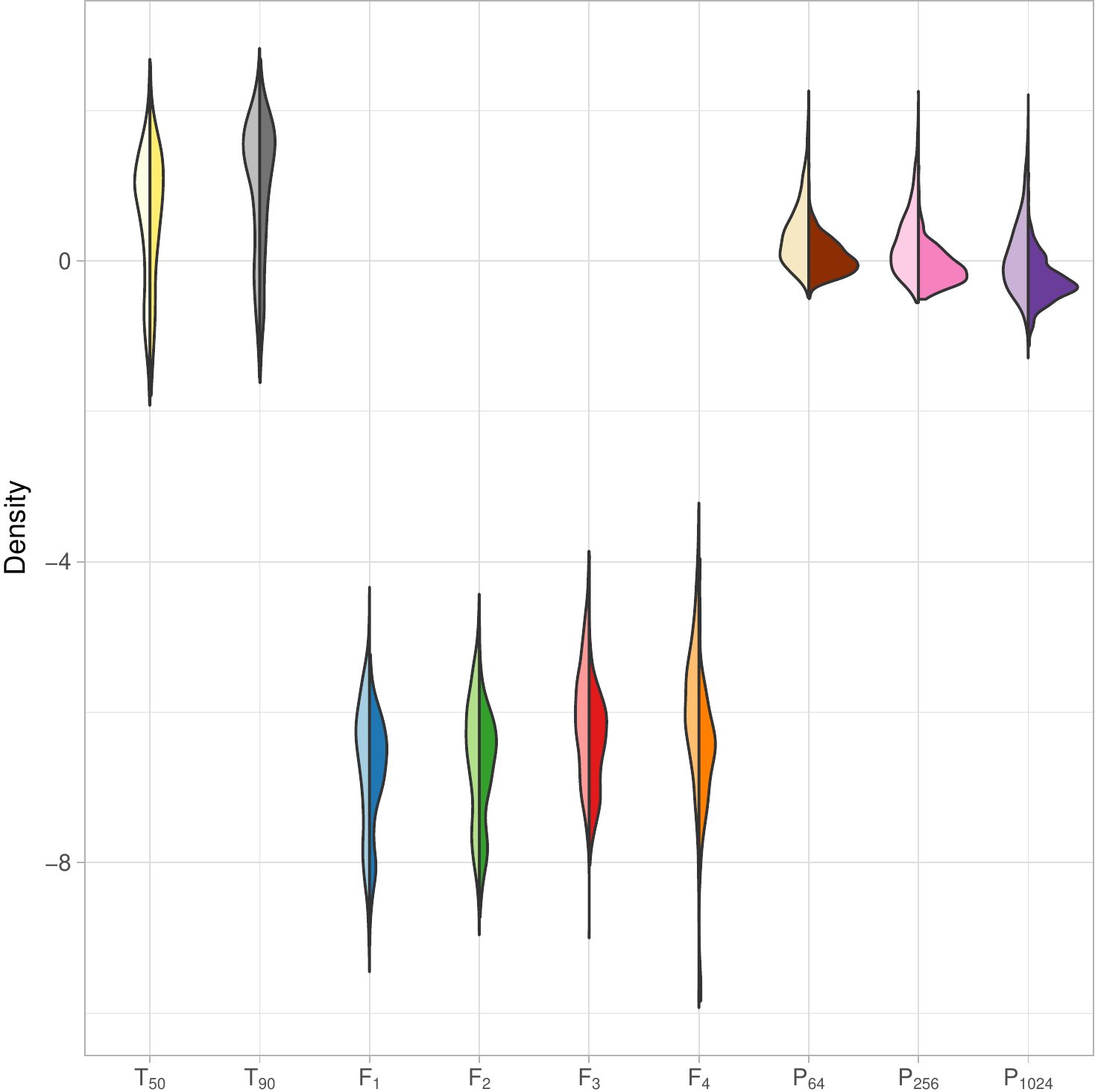}
\caption{Split violin plot of the nine observed variables where the
  left side of each violin is the kernel density estimate of the 1599 GRBs
  with complete information and the right side is the kernel density estimate
  of the 374 GRBs with incomplete information. } 
\label{fig:violin.plot}
\end{figure}

Using the above rule, we classified the 374
GRBs. Table \ref{tab:missing} displays the number of missing
parameters in each group and the descriptive statistics for each
group. 
There were 138 GRBs classified to the long-intermediate-intermediate group, 52 GRBs to short-faint-intermediate group, 33 GRBs to short-faint-soft group, 127 GRBs to long-bright-hard group and 24 GRBs to long-intermediate-hard group.
 Finally, we note that our 
classification strategy does not have the ability to find
(potentially) additional  classes
in the GRBs with missing parameters because we are using the classes
found from clustering the 1599 GRBs in the assignment. Therefore, it
would be desirable to have methodology that groups GRBs with missing
and complete observations in a holistic approach. Development of such
a statistical approach for {\it t}MMBC, while necessary, is however beyond the scope
of this paper.  

\section{Conclusions}
\label{conclusion}
Many authors have attempted to determine the kinds 
of GRBs in the BATSE catalogue using various statistical
techniques. While most authors have suggested that there are two kinds of GRBs, a few others have claimed this number to 
to be three and not two. Recently
\citet{chattopadhyayandmaitra17} classified 1599 GRBs using GMMBC
using three original and three derived variables and found the
optimal number of groups to be five. They presented carefully analyzed
evidences in support of their findings. Motivated by the fact that the nine
original variables might contain useful clustering information we
carried out MBC using the nine original variables after checking for
redundancy among them. Clustering the 1599 GRBs using $t$MMBC
showed that the optimal number of homogeneous  groups is
five and further supports the results obtained by
\citet{chattopadhyayandmaitra17} providing evidence that the
additional ellipsoidal groups found by them can not be subsumed inside
groups with heavier tails.  These groups are also more distinct than
the the groups obtained in \citet{chattopadhyayandmaitra17} as per
the overlap measure of \citet{maitraandmelnykov10}. Using the classification scheme of \citet{mukherjeeetal98} and \citet{chattopadhyayandmaitra17} our five groups have were classified as long-intermediate-intermediate, short-faint-intermediate, short-faint-soft, long-bright-hard, and long-intermediate-hard. Further a Bayes Classifier categorized 374 GRBs having missing information in one or more of the parameters to the five groups obtained from the 1599 GRBs having complete information using $t$MMBC. 138 GRBs were classified to the long-intermediate-intermediate group, 52 GRBs to the short-faint-intermediate group, 33 GRBs to the short-faint-soft group, 127 GRBs to the long-bright-hard group and 24 GRBs to the long-intermediate-hard group.

Our article has found five ellipoidally-dispersed groups. Recent 
work~\citep{almodovarandmaitra18} on syncytial clustering when applied
to the results of the analysis indicated that
the number of general-shaped groups in the BATSE 4Br GRB catalog is
indeed five and these five groups happen to be
ellipsoidally-dispersed. Therefore, we have great confidence in our
finding that there are five kinds of GRBs in the BATSE 4Br catalog.

There are a number of issues which can be looked upon as  potential
research problems. For one, it would be useful to incorporate and
further develop clustering methods that have the ability to group
observations that are complete and missing information in a holistic
manner. \citet{lithioandmaitra18} have, among others, redesigned the
$k$-means algorithm for such  scenarios but efficient methodology and software 
to fit $t$-mixture models with incomplete records would also be
helpful. Further, use of the logarithmic transformation, while  
standard in GRB analysis, may obfuscate further group
structure so an approach which incorporates finding the
transformation within the context of clustering would be worthwhile to
explore. Finally, the analysis in this paper can be
extended to GRBs catalogued  from sources such as the datasets from
the \textit{Swift} and \textit{Fermi} satellites to analyse whether
similar  results hold for GRBs observed from these other satellites.   

\section*{Acknowledgments}
We sincerely thank the Editor-in-Chief and two reviewers whose
insightful comments greatly improved this manuscript. Thanks also to
Asis Kumar Chattopadhyay and Tanuka Chattopadhyay for originally
introducing us to this research problem.




\bibliographystyle{mnras}
\newpage
\bibliography{references} 

\bsp	
\label{lastpage}
\end{document}